\newtheorem{lemma}{Lemma}
\newtheorem{theorem}{Theorem}
\newtheorem{remark}{Remark}
\newtheorem{proposition}{Proposition}
\newtheorem{proof}{Proof}
\begin{document}

\title{A High-resolution DOA Estimation Method with a Family of Nonconvex Penalties}

\author{Xiaohuan~Wu,~\IEEEmembership{Student~Member,~IEEE,}
        Wei-Ping~Zhu,~\IEEEmembership{Senior~Member,~IEEE,}
        and~Jun~Yan

\thanks{This work was supported by the National Natural Science Foundation of China under grant No. 61372122 and No. 61471205; the Innovation Program for Postgraduate in Jiangsu Province under grant No. KYLX 0813.}
\thanks{X. Wu and J. Yan are with the Key Lab of Broadband Wireless Communication and Sensor Network Technology, Nanjing University of Posts and Telecommunications, Nanjing, 210003, China (e-mail: 2013010101@njupt.edu.cn).}
\thanks{W.-P. Zhu is with the Department of Electrical and Computer Engineering, Concordia University, Montreal, Canada. He is also an Adjunct Professor with the School of Communication and Information Engineering, Nanjing University of Posts and Telecommunications, Nanjing, China (e-mail: weiping@ece.concordia.ca).}}

\maketitle

\begin{abstract}
The low-rank matrix reconstruction (LRMR) approach is widely used in direction-of-arrival (DOA) estimation. As the rank norm penalty in an LRMR is NP-hard to compute, the nuclear norm (or the trace norm for a positive semidefinite (PSD) matrix) has been often employed as a convex relaxation of the rank norm. However, solving a nuclear norm convex problem may lead to a suboptimal solution of the original rank norm problem. In this paper, we propose to apply a family of nonconvex penalties on the singular values of the covariance matrix as the sparsity metrics to approximate the rank norm. In particular, we formulate a nonconvex minimization problem and solve it by using a locally convergent iterative reweighted strategy in order to enhance the sparsity and resolution. The problem in each iteration is convex and hence can be solved by using the optimization toolbox. Convergence analysis shows that the new method is able to obtain a suboptimal solution. The connection between the proposed method and the sparse signal reconstruction (SSR) is explored showing that our method can be regarded as a sparsity-based method with the number of sampling grids approaching infinity. Two feasible implementation algorithms that are based on solving a duality problem and deducing a closed-form solution of the simplified problem are also provided for the convex problem at each iteration to expedite the convergence. Extensive simulation studies are conducted to show the superiority of the proposed methods.
\end{abstract}
\begin{IEEEkeywords}
Direction-of-arrival (DOA) estimation, gridless method, nonconvex penalties, Toeplitz covariance matrix, low-rank matrix reconstruction.
\end{IEEEkeywords}

\IEEEpeerreviewmaketitle

\section{Introduction}

As a fundamental problem in array signal processing, direction-of-arrival (DOA) estimation has been widely employed in many applications, i.e., radar, sonar, speech processing and wireless communications \cite{benesty2008microphone, wirelesscommunications, GuYujie2012TSP}. The study of DOA estimation problem has been a long history and many methods have been proposed in the past two decades, including some well known conventional methods such as Capon method and the subspace-based methods, e.g., MUSIC (see \cite{krim1996two} for a detailed review). MUSIC is equivalent to a large sample realization of the maximum likelihood (ML) method when the signals are uncorrelated, and has a super-resolution compared to Capon method under certain conditions \cite{johnson1992array}. However, MUSIC requires the number of sources as \emph{a priori} and a one-dimensional ($1$-D) searching procedure which is computationally expensive. Moveover, MUSIC faces difficulties in the case of sparse linear array (SLA), especially when the number of sources is greater than that of sensors. Although the root-MUSIC has been proposed for efficiency consideration, it only can be used for the uniform linear array (ULA).

Compressive sensing \cite{Donoho2006} is a technique of reconstructing a high dimensional signal from fewer samples and has been introduced into the DOA estimation area, resulting in a number of sparsity-based methods for DOA estimation \cite{malioutov2005sparse, hyder2010direction, liuzhangmeng2012SBL, liu2013CMSR, L1SRACV2011TSP}. The sparsity-based methods exhibit several advantages over subspace-based and the ML methods such as robustness to noise, no requirement for source number, and improved resolution. However, the sparsity-based methods require the DOAs of interest to be sparse in the whole angle space. To this end, the angle space has to be discretized into a finite set of angle grids and the DOAs of interest are assumed to exactly lie on the grids. However, in reality, the DOAs could lie in the continuous infinite set of angles, and hence the assumption holds only when the size of the set tends to infinity, which results in an unacceptable computational cost. Moreover, the discretization strategy may degenerate the performance of the sparsity-based methods as there is often an unavoidable bias between the true DOAs and the predefined grids, which can be interpreted as the basis mismatch issue. To address this issue, several modified off-grid sparsity-based methods have been proposed \cite{mySJSBL, yangzai2013off-grid, Zhuhao2011STLSoriginaloffgrid, TanZhao_offgrid2014TSP}. However, these methods are the mitigation measures only, since the discretization still exists which does not eliminate the effect of basis mismatch but bring complexity issue.

Recently, the atomic norm is introduced as a mathematical theory by V. Chandrasekaran \emph{et al.} \cite{atomicnorm2012}, and then extended for line spectral estimation by Tang in \cite{Tang2013offthegrid}. Since DOA estimation problem actually equates to the frequency recovery problem in the presence of multiple measurement vectors (MMVs) which share the same frequency components, the atomic norm minimization (ANM) technique can be directly applied to the DOA estimation, as explored by Yang in \cite{yangzai2015GLS}. The ANM can be referred to as a gridless method since it views the DOA estimation as a sparse reconstruction problem with a continuous infinite dictionary and recovers the DOAs by solving a semidefinite programming (SDP) problem. It is shown that the DOAs can be exactly recovered with a high probability provided that they are appropriately separated. Note that since the discretization is not required, ANM is able to eliminate the effect of basis mismatch. However, the DOA separation condition limits the estimation precision of ANM. Furthermore, ANM fails to give a satisfying performance in the moderate range of the signal-to-noise ratio (SNR).
The structured covariance matrix reconstruction (SCMR), also referred to as the covariance matching technique, is another class of gridless methods \cite{ITAM1988ICASSP, COMETottersten1998covariance, AML1999LiHBTSP, SPICE2011TSP, yangzai2014SPA, Liyuanxin2016TSP}, which explores the Hermitian and Toeplitz structural information of the covariance matrix as \emph{a priori}. In particular, the existing covariance matching criteria (CMC) can be divided into three categories as shown in Table \ref{table CMC}.\footnote{In Table \ref{table CMC}, $\hat{\bm{R}}$ denotes the sample covariance matrix and $\bm{R}$ denotes the unknown covariance matrix with a Toeplitz structure to be estimated.}
All these CMC-based methods are statistically consistent in the number of snapshots. In comparison, CMC$1$ is shown to be inconsistent in SNR since it makes no effort to ensure appropriate subspace approximation while CMC$2$ and CMC$3$ yield consistent DOA estimates as the SNR increases. On the other hand, CMC$2$ exhibits better estimation performance than CMC$3$ when SNR is low or moderate while CMC$3$ is applicable to the correlated signals and single-snapshot cases compared to CMC$2$ \cite{yangzai2014SPA}.

\begin{table}[!t]
\renewcommand{\arraystretch}{1.3}
\caption{The covariance matching criteria (CMC) and the corresponding representative methods. }
\label{table CMC}
\centering
\begin{tabular}{ccc}\toprule
Category & Formulation & Representative Methods \\
\midrule
CMC$1$ & $\Big\| \hat{\bm{R}} - \bm{R} \Big\|_F^2$ &  ITAM \cite{ITAM1988ICASSP}, method in \cite{Liyuanxin2016TSP} \\
\specialrule{0em}{2pt}{2pt}
CMC$2$ & $\Big\| \hat{\bm{R}}^{-\frac{1}{2}} (\hat{\bm{R}} - \bm{R}) \hat{\bm{R}}^{-\frac{1}{2}} \Big\|_F^2$ & COMET \cite{COMETottersten1998covariance}, AML \cite{AML1999LiHBTSP} \\
\specialrule{0em}{2pt}{2pt}
CMC$3$ & $\Big\| \bm{R}^{-\frac{1}{2}} (\hat{\bm{R}} - \bm{R}) \hat{\bm{R}}^{-\frac{1}{2}} \Big\|_F^2$ & SPICE \cite{SPICE2011TSP}, SPA \cite{yangzai2014SPA} \\
\bottomrule
\end{tabular}
\end{table}

Recently, we have proposed a gridless method for DOA estimation named covariance matrix reconstruction approach (CMRA) based on CMC$2$ in \cite{my2016ICASSP} and \cite{my2016TVT}. CMRA is applicable to the ULA and SLA and can be regarded as the gridless version of the sparsity-based method. Nevertheless, since CMRA approximates the rank norm by the trace norm, which is a loose approximation of the rank norm, there may exist a large gap between the solutions by using the two norms \cite{GapBetweenTraceAndRankNorm2011ISIT}. Such a difference is similar to that between the $\ell_1$ and $\ell_0$ norms. In particular, for reconstructing a sparse vector, to promote sparsity to the greatest extent possible, the natural choice of the sparsity metric should be $\ell_0$ norm, which, however, is NP-hard to compute. A practical choice of the sparsity metric is the $\ell_1$ norm which is a convex relaxation but a loose approximation of the $\ell_0$ norm. The gap between the two norms can be mitigated by the nonconvex surrogates, e.g., the $\ell_p (0<p<1)$ norm, Logarithm, etc., which are well-studied in literature \cite{L_pNorm2014Mohammadi, Logarithm2003FazelandBoyd, Logarithm2012Mohan, LaplaceNorm2009Trzasko}. Hence, even though CMRA is able to eliminate the basis mismatch effect and give satisfactory performance in DOA estimation, we believe its ability can be further improved by employing the nonconvex surrogates, which, to the best of our knowledge, have not been studied in literature.

In this paper, motivated by the relationship between the $\ell_1$ and $\ell_0$ norms, we introduce a family of nonconvex continuous surrogate functions as the sparsity metrics rather than the convex trace norm in CMRA and propose an iteratively reweighted version of CMRA algorithm, called improved CMRA (ICMRA) to solve the nonconvex problem. Further analysis shows that ICMRA serves as CMRA in the first iteration and a reweighted CMRA in each of the following iterations with the weights being based on the latest estimate. A convergence analysis is also provided to verify that ICMRA is able to obtain at least a suboptimal solution. We then explore the connection between the ICMRA and the sparsity-based methods, as well as the atomic norm, showing that ICMRA is able to enhance the sparsity. Two feasible implementation algorithms are also provided to speed up the convergence. One is to solve its duality problem since by which a faster convergence is observed. The other one gives a closed-form solution by simplying the constraint. Extensive numerical simulations are carried out to verify the performance of our proposed methods.

Notations used in this paper are as follows. $\mathbb{C}$ denotes the set of complex numbers. For a matrix $\bm{A}$, $\bm{A}^T$, $\bar{\bm{A}}$ and $\bm{A}^H$ denote the transpose, conjugate and conjugate transpose of $\bm{A}$, respectively. $\|\bm{A}\|_F$ denotes the Frobenius norm of $\bm{A}$, vec$(\bm{A})$ denotes the vectorization operator that stacks matrix $\bm{A}$ column by column, $\bm{A}\odot\bm{B}$ is the Khatri-Rao product of matrices $\bm{A}$ and $\bm{B}$, and tr$(\bullet)$ and rank$(\bullet)$ are the trace and rank operators, respectively. 
$\bm{A}\geq \bm{0}$ means that matrix $\bm{A}$ is positive semidefinite. $\bm{\sigma}[\bm{A}]$ and $\sigma_i[\bm{A}]$ denote the singular value and the $i$-th singular value of $\bm{A}$, respectively. For a vector $\bm{x}$, $\bm{x}\succeq \bm{0}$ means that every entry of $\bm{x}$ is nonnegative, $\|\bm{x}\|_2$ denotes the $\ell_2$ norm of $\bm{x}$, diag$(\bm{x})$ denotes a diagonal matrix with the diagonal entries being the entries of vector $\bm{x}$ in turn. $\text{Re}(\bullet)$ and $\text{Im}(\bullet)$ stand for the real and the imaginary parts of a complex variable, respectively.

The rest of the paper is organized as follows. Section \ref{sec preliminaries} revisits the signal model, the CMRA method and the atomic norm as the preliminaries. Section \ref{sec reweighted CMRA} presents a family of nonconvex sparsity metrics and then introduces the ICMRA method. A convergence analysis is also provided at the end of this section. Section \ref{sec connection to SSR and atomic norm} reveals the relationship between ICMRA and some of the existing methods. Section \ref{sec computationally efficient implementations} provides two feasible implementation algorithms. Simulations are carried out in Section \ref{sec simulation} to demonstrate the performance of our methods. Finally, Section \ref{sec conclusions} concludes the whole paper.

\section{Preliminaries}
\label{sec preliminaries}

\subsection{Signal Model}

For better illustration, we denote $\bm{\Omega} = \{\Omega_1,\cdots,\Omega_M\} \subseteq \{1,\cdots,N\}$ as the sensor index set of a linear array. In particular, the ULA has $\bm{\Omega} = \{1,\cdots,N\}$ and the SLA has $\bm{\Omega}\subset \{1,\cdots,N\}$ with $\Omega_1 = 1$ and $\Omega_M = N$, where $M$ is the number of sensors.\footnote{The SLA considered in this paper only involves the array whose coarray is an $N$-element ULA.} Note that the ULA can be considered as a special case of SLA, hence we use the SLA in the following unless otherwise stated.

Assume $K$ narrowband far-field signals impinge onto an SLA with $M$ sensors from directions of $\bm{\theta} = \{\theta_1,\cdots,\theta_K\}$. The array output with respect to $L$ snapshots is,
\begin{equation}\label{eq signal model}
  \bm{X}_{\bm{\Omega}} = \bm{A}_{\bm{\Omega}} \bm{S} + \bm{N},
\end{equation}
where $\bm{X}_{\bm{\Omega}}\in \mathbb{C}^{M\times L}$, $\bm{A}_{\bm{\Omega}} = [\bm{a}_{\bm{\Omega}}(\theta_1),\cdots,\bm{a}_{\bm{\Omega}}(\theta_K)]\in \mathbb{C}^{M\times K}$ and $\bm{S}\in \mathbb{C}^{K\times L}$ denote the array output, the manifold matrix and the waveform of the impinging signals, respectively. $\bm{N}\in \mathbb{C}^{M\times L}$ is the white Gaussian noise with zero mean. The steering vector $\bm{a}_{\bm{\Omega}}(\theta_k) = [e^{j\pi (\Omega_1-1) \sin \theta_k}, \cdots, e^{j\pi (\Omega_M-1) \sin \theta_k}]^T$ contains the DOA information which needs to be determined.

\subsection{The CMRA Method}

The CMRA aims to first reconstruct the covariance matrix of the coarray of the SLA according to the output. In particular, 
let $\bm{\Gamma} \in \{0,1\}^{M\times N}$ be a selection matrix such that the $m$-th row of $\bm{\Gamma}$ contains all zeros but a single $1$ at the $\Omega_m$-th position. Then the covariance matrix of $\bm{X}_{\bm{\Omega}}$ can be written as,
\begin{equation}\label{eq CMRA1}
\begin{split}
  \bm{R_{\Omega}} &= \bm{\Gamma} T(\bm{u}) \bm{\Gamma}^T + \sigma \bm{I}\\
                  &= T_{\bm{\Omega}}(\bm{u}) + \sigma \bm{I},
\end{split}
\end{equation}
where $\sigma$ denotes the noise power and $T_{\bm{\Omega}}(\bm{u}) \triangleq \bm{\Gamma} T(\bm{u}) \bm{\Gamma}^T$ in which $T(\bm{u})$ is a Toeplitz Hermitian matrix with $\bm{u}=[u_1,\cdots,u_N]^T$ being the first column of $T(\bm{u})$. In practical applications, $\bm{R_{\Omega}}$ is approximated by the sample covariance matrix with $L$ snapshots as,
\begin{equation}\label{eq covariance}
  \hat{\bm{R}}_{\bm{\Omega}} = \frac{1}{L}\bm{X}_{\bm{\Omega}}\bm{X}^H_{\bm{\Omega}}.
\end{equation}
The error between $\hat{\bm{R}}_{\bm{\Omega}}$ and $\bm{R_{\Omega}}$ is defined as $\bm{E}_{\bm{\Omega}} = \hat{\bm{R}}_{\bm{\Omega}} - \bm{R_{\Omega}}$ which has the following property,
\begin{equation}\label{eq CMRA2}
  \big\| \bm{Q} \text{vec}(\bm{E_{\Omega}}) \big\|_2^2 \sim \text{As}\chi^2(M^2),
\end{equation}
where $\bm{Q}=\sqrt{L} \bm{R}_{\bm{\Omega}}^{-\frac{T}{2}} \otimes \bm{R}_{\bm{\Omega}}^{-\frac{1}{2}}$, and $\text{As}\chi^2(M^2)$ denotes the asymptotic chi-square distribution with $M^2$ degrees of freedom (see \cite{my2016ICASSP}, \cite{Liu2013CV-RVM} for more details). As a result, CMRA formulates the following low-rank matrix reconstruction (LRMR) model for $T(\bm{u})$,
\begin{equation}\label{eq CMRA rank model}
  \min_{\bm{u}}\ \text{rank}[T(\bm{u})]\quad \text{s.t.}\ \big\| \bm{Q} \text{vec}(\bm{E}_{\bm{\Omega}}) \big\|_2^2 \leq \beta^2,\ T(\bm{u}) \geq \bm{0},
\end{equation}
which is then relaxed into the following convex optimization model by replacing the rank norm with the nuclear norm, or equivalently, the trace norm for a positive semidefinite (PSD) matrix,
\begin{equation}\label{eq CMRA3}
  \min_{\bm{u}}\ \text{tr}[T(\bm{u})]\quad \text{s.t.}\ \big\| \bm{Q} \text{vec}(\bm{E}_{\bm{\Omega}}) \big\|_2^2 \leq \beta^2,\ T(\bm{u}) \geq \bm{0},
\end{equation}
where $\beta$ can be easily determined by using the property of the chi-square distribution in (\ref{eq CMRA2}).\footnote{In this paper, parameter $\beta$ in CMRA and the proposed ICMRA is calculated using MATLAB routine \texttt{chi2inv}$(1-p, M^2)$, where $p$ is set to 0.001 in general.} For
simplicity, we assume the noise power $\sigma$ can be estimated as the smallest eigenvalue of $\hat{\bm{R}}_{\bm{\Omega}}$.\footnote{This assumption, which is a compromise due to the lack of knowledge of the number of signals, is reasonable with moderate or large number of snapshots since the error-suppression criterion in (\ref{eq CMRA3}) can tolerate well this underestimated error \cite{L1SRACV2011TSP}.} After obtaining $T(\bm{u})$ by solving (\ref{eq CMRA3}), the DOAs $\bm{\theta}$ can be easily determined by using the subspace-based methods or the Vandermonde decomposition lemma (see \cite{yangzai2014SPA} for more details).

\begin{remark}\label{Remark Rfull}
  It should be noted that, in the case of using an SLA, if the number of impinged signals is equal to or larger than that of sensors, the space of the impinged signals will exceed the space generated by the array. As a consequence, estimating the noise power as the smallest eigenvalue of $\hat{\bm{R}}_{\bm{\Omega}}$ is impractical. To handle this special case, we should first reconstruct the whole space spanned by the coarray, where the space of the impinged signals not fully occupy. To do this, the Hermitian Toeplitz structure of the covariance matrix can be used. In particular, let $\hat{\bm{R}}^{\text{full}} = \bm{\Gamma}^T \hat{\bm{R}}_{\bm{\Omega}} \bm{\Gamma}$ and replace its zero elements with the mean of the nonzero entries in the same diagonals. In this way, the impinged signals will fall into the space spanned by the coarray and the noise power can be obtained as the smallest eigenvalue of $\hat{\bm{R}}^{\text{full}}$.
\end{remark}

\subsection{Atomic Norm}

The concept of atomic norm was first introduced in \cite{atomicnorm2012}, which generalizes many norms such as $\ell_1$ norm and the nuclear norm. Considering the DOA estimation using a ULA, let
\begin{equation}\label{eq atomic set}
  \mathcal{A} = \Big\{\frac{1}{N}\bm{a}(\theta)\bm{a}^H(\theta):\theta\in [-90^{\circ}, 90^{\circ}]\Big\},
\end{equation}
which is a set of unit-norm rank-one matrices. The atomic $\ell_0$ norm of $T(\bm{u})$ is defined as the smallest number of atoms in $\mathcal{A}$ composing $T(\bm{u})$, which is shown as,
\begin{equation}\label{eq atomic l0 norm}
  \begin{split}
  &\quad \| T(\bm{u}) \|_{\mathcal{A},0}\\
   &= \inf_{c_k> 0} \left\{ \mathcal{K} : T(\bm{u}) = \sum_{k=1}^{\mathcal{K}} c_k\bm{B}(\theta_k),\ \bm{B}(\theta_k)\in \mathcal{A} \right\}\\
  &= \text{rank}[T(\bm{u})].
\end{split}
\end{equation}
Hence, model (\ref{eq CMRA rank model}) is equivalent to,
\begin{equation}\label{eq atomic l0 model}
  \min_{\bm{u}}\ \|T(\bm{u})\|_{\mathcal{A},0} \quad \text{s.t.}\ \big\| \bm{Q} \text{vec}(\bm{E}) \big\|_2^2 \leq \beta^2,\ T(\bm{u}) \geq \bm{0}.
\end{equation}
It is easy to see that although the atomic $\ell_0$ norm directly enhances sparsity, it is nonconvex and NP-hard to compute. Hence the relaxation of $\| T(\bm{u}) \|_{\mathcal{A},0}$, named the atomic norm, is introduced as,
\begin{equation}\label{eq atomic norm}
  \begin{split}
  &\quad \| T(\bm{u}) \|_{\mathcal{A}}\\
   &= \inf_{c_k> 0} \left\{ \sum_k c_k : T(\bm{u}) = \sum_{k=1}^{\mathcal{K}} c_k\bm{B}(\theta_k),\ \bm{B}(\theta_k)\in \mathcal{A} \right\}\\
  &= \sum_k N p_k\\
  &= \text{tr}[T(\bm{u})],
\end{split}
\end{equation}
which indicates that model (\ref{eq CMRA3}) is equivalent to
\begin{equation}\label{eq atomic model}
  \min_{\bm{u}}\ \|T(\bm{u})\|_{\mathcal{A}} \quad \text{s.t.}\ \big\| \bm{Q} \text{vec}(\bm{E}) \big\|_2^2 \leq \beta^2,\ T(\bm{u}) \geq \bm{0}.
\end{equation}

It should be mentioned that, the solution obtained by solving (\ref{eq CMRA3}) or (\ref{eq atomic model}) is usually suboptimal to the corresponding LRMR problem since the trace norm is a loose approximation of the rank norm, i.e., there still exists a noticeable gap between the solutions of the two norms. To achieve a better approximation of the rank norm, the next section, we propose an iterative reweighted method by introducing a family of nonconvex penalties as the sparsity metrics rather than the trace norm.

\section{Improved CMRA}
\label{sec reweighted CMRA}

\subsection{The Novel Sparsity Metrics}

\begin{figure}[!t]
\centering
\includegraphics[width=3in]{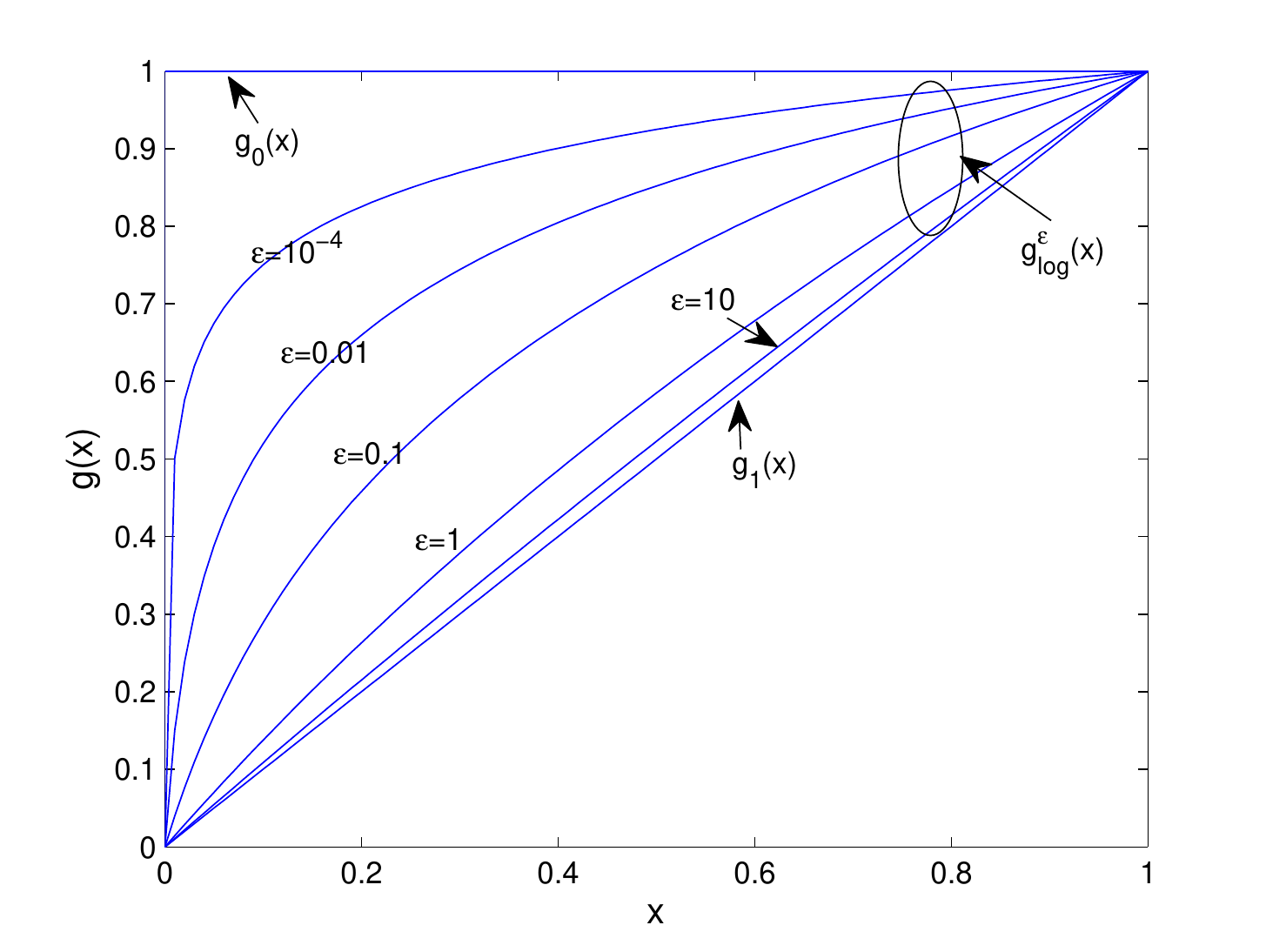}
\caption{Illustration of $g_0(x)$, $g_1(x)$ and $g_{\text{log}}^{\epsilon}(x)$.}
\label{Fig illustration of g0 g1}
\end{figure}

It is easy to see that the rank norm in model (\ref{eq CMRA rank model}) is nonconvex and challenging to solve, while the trace norm, which is utilized in the CMRA method, is computable and the best convex approximation of the rank norm but has the worst fitting. Hence it is essential to find a better approximation of the rank norm but still with a low computational complexity. In particular, for a matrix $\bm{X} \in \mathbb{C}^{N\times N}$, the rank and nuclear norm can be represented as
\begin{equation}\label{eq rank trace}
\begin{split}
  &\text{rank}[\bm{X}] = \min_i \sigma_i(\bm{X})>0 = \sum_i g_0[\sigma_i(\bm{X})];\\
  &\|\bm{X}\|_* = \sum_i \sigma_i(\bm{X}) = \sum_i g_1[\sigma_i(\bm{X})],
\end{split}
\end{equation}
respectively, where we have assumed $\sigma_1[\bm{X}]\geq \sigma_2[\bm{X}]\geq \cdots \geq \sigma_N[\bm{X}] \geq 0$, and
\begin{equation}\label{eq RCMRA1}
  g_0(x) = \Big\{ \begin{matrix} 1&x>0\\0&x=0 \end{matrix}\ ; \quad g_1(x) = x,\ x\geq 0.
\end{equation}
Hence finding an approximation of the rank norm is equivalent to find a nonconvex sparsity metric $g(x)$ which bridges the gap between $g_0(x)$ and $g_1(x)$  (see Fig. \ref{Fig illustration of g0 g1}).
To the best of our knowledge, there exist several nonconvex penalties in literature including Logarithm \cite{Logarithm2003FazelandBoyd}, $\ell_p$ norm \cite{L_pNorm2014Mohammadi} and Laplace \cite{LaplaceNorm2009Trzasko} which we denote as $g^{\epsilon}_{\text{log}}(x)$, $g^{\epsilon}_{\ell_p}(x)$ and $g^{\epsilon}_{\text{lap}}(x)$, respectively, where $\epsilon$ denotes the trade-off parameter. To illustrate these penalties more precisely, we enumerate of them in Table \ref{table penalties} for comparison and show the curve of $g^{\epsilon}_{\text{log}}(x)$ with respect to different $\epsilon$ as an example in Fig. \ref{Fig illustration of g0 g1},\footnote{$g^{\epsilon}_{\text{log}}(x)$ is translated and scaled such that it equals $0$ and $1$ at $x = 0$ and $1$ respectively for better illustration.} from which we can see that $g^{\epsilon}_{\text{log}}(x)$ approaches $g_1(x)$ with a large value of $\epsilon$ and gets close to $g_0(x)$ when $\epsilon\rightarrow 0$. Note that $g^{\epsilon}_{\ell_p}(x)$ and $g^{\epsilon}_{\text{lap}}(x)$ have similar properties which are omitted for brevity. In fact, these functions $g$ can be employed as the nonconvex penalties since they satisfy the following properties:

\textbf{P1}: $g$ is concave, monotonically increasing on $[0,+\infty)$.

\textbf{P2}: $g$ is continuous but possibly nonsmooth on $[0,+\infty)$.

\textbf{P3}: after being translated and scaled, $g$ approaches $g_0$ when $\epsilon\rightarrow 0$ and $g_1$ when $\epsilon$ is large.

Some other functions satisfying the aforementioned properties have also been proposed as the penalties, e.g., smoothly clipped absolute deviation (SCAD) \cite{SCAD2001Fan} and minimax concave penalty (MCP) \cite{MCP2010Zhang}, which, however, are omitted in this paper because they usually fail to give a satisfactory performance when employed in an LRMR problem \cite{IRNN2016TIP}.
\begin{table}[!t]
\renewcommand{\arraystretch}{1.3}
\caption{Some nonconvex penalties of $g_0(x)$ and their gradients. }
\label{table penalties}
\centering
\begin{tabular}{ccc}
\toprule
\text{Penalty} & $g^{\epsilon}(x),x\geq0,\epsilon>0$ & $\text{gradient}\ \nabla g^{\epsilon}(x)$ \\
\midrule
\text{Logarithm} & $\ln(x+\epsilon)$ & $\frac{1}{x+\epsilon}$ \\
\specialrule{0em}{1pt}{1pt}
$\ell_p$ \text{norm} & $x^{\epsilon}$ & $\epsilon x^{\epsilon-1}$ \\
\specialrule{0em}{1pt}{1pt}
\text{Laplace} & $1-e^{-\frac{x}{\epsilon}}$ & $\frac{1}{\epsilon}e^{-\frac{x}{\epsilon}}$ \\
\bottomrule
\end{tabular}
\end{table}

Motivated by (\ref{eq rank trace}), the rank norm model (\ref{eq CMRA rank model}) and the trace norm model (\ref{eq CMRA3}) can be rewritten as,
\begin{equation}\label{eq rank and trace}
\begin{split}
  &\min_{\bm{u}}\ \mathcal{G}[T(\bm{u})] \\ &\quad\text{s.t.}\ \big\| \bm{Q} \text{vec}(\bm{E}_{\bm{\Omega}}) \big\|_2^2 \leq \beta^2,\ T(\bm{u}) \geq \bm{0},
\end{split}
\end{equation}
where $\mathcal{G}[T(\bm{u})] = \sum_i g[\sigma_i(T(\bm{u}))]$ with $g[\sigma_i(T(\bm{u}))]$ being $g_0 [\sigma_i(T(\bm{u}))]$ for model (\ref{eq CMRA rank model}) or $g_1[\sigma_i(T(\bm{u}))]$ for model (\ref{eq CMRA3}).

Further inspired by the link between $g^{\epsilon}(x)$ and $g_0(x)$ or $g_1(x)$ above, we propose the following general nonconvex optimization model,
\begin{equation}\label{eq RCMRA2}
\begin{split}
  &\min_{\bm{u}}\ \mathcal{G}^{\epsilon}[T(\bm{u})] \\ &\quad\text{s.t.}\ \big\| \bm{Q} \text{vec}(\bm{E}_{\bm{\Omega}}) \big\|_2^2 \leq \beta^2,\ T(\bm{u}) \geq \bm{0},
\end{split}
\end{equation}
where $\mathcal{G}^{\epsilon}[T(\bm{u})]= h^{\epsilon}\big[\bm{\sigma}[T(\bm{u})]\big] =\sum_i g^{\epsilon}\big[\sigma_i[T(\bm{u})]\big]$. Intuitively, we expect the new nonconvex model to bridge the gap between models (\ref{eq CMRA rank model}) and (\ref{eq CMRA3}) when $\epsilon$ varies from a large number to zero.

\subsection{An Iteratively Reweighted Algorithm}
\label{sec reweighted CMRA B}
Since model (\ref{eq RCMRA2}) is nonconvex and no efficient algorithms can guarantee to obtain the global minimum, we use the majorization-maximization (MM) method to obtain a suboptimal solution instead. The MM method is an iterative approach and the cost function is replaced by its tangent plane in each iteration. In particular, denote $\bm{u}_j$ as the optimization variable of the $j$-th iteration. Since $\mathcal{G}^{\epsilon}[T(\bm{u})]=\sum_i g^{\epsilon}\big[\sigma_i[T(\bm{u})]\big]$ is concave, we have,
\begin{equation}\label{eq RCMRA3}
  \mathcal{G}^{\epsilon}[T(\bm{u})] \leq \mathcal{G}^{\epsilon}[T(\bm{u}_j)] + \text{tr}\big[\nabla\mathcal{G}^{\epsilon}[T(\bm{u}_j)]T(\bm{u}-\bm{u}_j)\big],
\end{equation}
As a result, the optimization problem at the $(j+1)$-th iteration becomes,
\begin{equation}\label{eq RCMRA4}
\begin{split}
    &\min_{\bm{u}}\ \text{tr}\big[\bm{W}_j^{\epsilon}T(\bm{u})\big] \\ &\quad\text{s.t.}\ \big\| \bm{Q} \text{vec}(\bm{E}_{\bm{\Omega}}) \big\|_2^2 \leq \beta^2,\ T(\bm{u}) \geq \bm{0},
\end{split}
\end{equation}
where $\bm{W}_j^{\epsilon} \triangleq \nabla\mathcal{G}^{\epsilon}[T(\bm{u}_j)]$. To calculate $\bm{W}_j^{\epsilon}$, we use the following proposition,
\begin{proposition}[\cite{ICRA2014Mohammadi}]
  Suppose that $\mathcal{G}^{\epsilon}(\bm{X})$ is represented as $\mathcal{G}^{\epsilon}(\bm{X}) = h^{\epsilon}(\bm{\sigma}(\bm{X})) = \sum_i g^{\epsilon}(\sigma_i(\bm{X}))$, where $\bm{X}\geq \bm{0}$ with the singular value decomposition (SVD) $\bm{X} = \bm{D}\text{diag}[\bm{\sigma}(\bm{X})]\bm{D}^T$. Denote two mappings $h^{\epsilon}$ and $f^{\epsilon}$ which are differentiable and concave. Then the gradient of $\mathcal{G}^{\epsilon}(\bm{X})$ at $\bm{X}$ is
  \begin{equation}\label{eq gradient of G}
    \nabla \mathcal{G}^{\epsilon}(\bm{X}) = \bm{D} \text{diag}(\bm{\eta}) \bm{D}^T,
  \end{equation}
  where $\bm{\eta} = \nabla h^{\epsilon}(\bm{\sigma}(\bm{X}))$ denotes the gradient of $h$ at $\bm{\sigma}(\bm{X})$.
\end{proposition}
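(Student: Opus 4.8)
\emph{Proof sketch.} The plan is to recognize $\mathcal{G}^{\epsilon}$ as a \emph{spectral function}---a function of $\bm{X}$ depending only on its spectrum---and to differentiate it through first-order eigenvalue perturbation. Since $\bm{X}\geq \bm{0}$, its singular values coincide with its eigenvalues and the SVD $\bm{X}=\bm{D}\,\text{diag}[\bm{\sigma}(\bm{X})]\,\bm{D}^H$ is an eigendecomposition with $\bm{D}$ unitary and columns $\bm{d}_1,\dots,\bm{d}_N$. The structural fact I would exploit is that $h^{\epsilon}(\bm{\sigma})=\sum_i g^{\epsilon}(\sigma_i)$ is a \emph{permutation-symmetric} and (by Table~\ref{table penalties}) differentiable function of the spectrum; it is this symmetry together with differentiability, not concavity, that yields the gradient identity---concavity is needed only for the majorization bound (\ref{eq RCMRA3}).

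With this setup, I would compute the directional derivative along an arbitrary Hermitian direction $\bm{\Delta}$. Writing $\bm{X}(t)=\bm{X}+t\bm{\Delta}$ and using the first-order perturbation formula for a simple eigenvalue, $\frac{d}{dt}\sigma_i(\bm{X}(t))\big|_{t=0}=\bm{d}_i^H\bm{\Delta}\bm{d}_i$, the chain rule gives
\begin{equation*}
  \frac{d}{dt}\mathcal{G}^{\epsilon}(\bm{X}(t))\Big|_{t=0}
  =\sum_i \nabla g^{\epsilon}(\sigma_i)\,\bm{d}_i^H\bm{\Delta}\bm{d}_i
  =\sum_i \eta_i\,\bm{d}_i^H\bm{\Delta}\bm{d}_i
  =\text{tr}\!\big[\bm{\Delta}\,\bm{D}\,\text{diag}(\bm{\eta})\,\bm{D}^H\big],
\end{equation*}
where $\eta_i=\nabla g^{\epsilon}(\sigma_i)$ are the entries of $\bm{\eta}=\nabla h^{\epsilon}(\bm{\sigma}(\bm{X}))$. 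Matching this against the defining relation $\langle \nabla\mathcal{G}^{\epsilon}(\bm{X}),\bm{\Delta}\rangle=\text{tr}[\bm{\Delta}\,\nabla\mathcal{G}^{\epsilon}(\bm{X})]$ for the trace inner product identifies $\nabla\mathcal{G}^{\epsilon}(\bm{X})=\bm{D}\,\text{diag}(\bm{\eta})\,\bm{D}^H$, which is the claimed formula (reducing to $\bm{D}\,\text{diag}(\bm{\eta})\,\bm{D}^T$ in the real symmetric case).

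The step I expect to be the main obstacle is the presence of \emph{repeated} eigenvalues, since there the individual $\sigma_i$ are not differentiable in $\bm{X}$ and the eigenbasis $\bm{D}$ is not unique, so the naive perturbation argument does not directly apply. The resolution I would use is that the symmetry of $h^{\epsilon}$ renders this harmless: on the eigenspace associated with a degenerate eigenvalue $\sigma$, every relevant entry of $\bm{\eta}$ equals the common value $\nabla g^{\epsilon}(\sigma)$, so the corresponding block of $\bm{D}\,\text{diag}(\bm{\eta})\,\bm{D}^H$ collapses to $\nabla g^{\epsilon}(\sigma)$ times the spectral projector onto that eigenspace---a quantity manifestly independent of the basis chosen within the space. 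The candidate gradient is therefore well defined, and to make the directional-derivative argument fully rigorous across multiplicities I would invoke the differentiability theorem for symmetric spectral functions: a symmetric $h^{\epsilon}$ differentiable at $\bm{\sigma}(\bm{X})$ guarantees that $\mathcal{G}^{\epsilon}=h^{\epsilon}\circ\bm{\sigma}$ is differentiable at $\bm{X}$ with gradient $\bm{D}\,\text{diag}(\nabla h^{\epsilon}(\bm{\sigma}(\bm{X})))\,\bm{D}^H$, which closes the proof.
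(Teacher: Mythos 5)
The paper never proves this proposition: it is imported verbatim with the citation \cite{ICRA2014Mohammadi}, and the only proof in the paper's appendix is for Theorem~\ref{theorem convergent}. So there is no in-paper argument to compare against, and your proposal has to stand on its own merits --- which it does. Your generic-case computation is correct: for $\bm{X}\geq\bm{0}$ the singular values coincide with the eigenvalues, the first-order perturbation formula $\frac{d}{dt}\sigma_i(\bm{X}+t\bm{\Delta})\big|_{t=0}=\bm{d}_i^H\bm{\Delta}\bm{d}_i$ holds at simple eigenvalues, and matching $\sum_i\eta_i\,\bm{d}_i^H\bm{\Delta}\bm{d}_i=\text{tr}\big[\bm{\Delta}\,\bm{D}\,\text{diag}(\bm{\eta})\,\bm{D}^H\big]$ against the trace inner product identifies the gradient; your handling of repeated eigenvalues (the candidate gradient reduces, on each degenerate eigenspace, to the common value $\nabla g^{\epsilon}(\sigma)$ times the spectral projector, hence is basis-independent) is the right resolution, and appealing to the differentiability theorem for symmetric spectral functions (Lewis) to cover multiplicities rigorously is exactly what the cited reference itself builds on. Two remarks. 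First, your observation that concavity plays no role in the gradient identity --- only permutation symmetry and differentiability of $h^{\epsilon}$ --- is correct, and it usefully clarifies the statement as printed, whose hypothesis mentions a mapping $f^{\epsilon}$ that never appears and whose $\bm{D}^T$ should be $\bm{D}^H$ in the complex Hermitian case, as you note. Second, be aware that your final step invokes a theorem of which this proposition is essentially the separable special case, so strictly speaking you have reduced the claim to a known result rather than proved it from scratch; given that the paper itself treats the proposition as a literature import, that level of rigor is appropriate here, but a fully self-contained proof would need to replace that invocation with a direct limiting argument at degenerate spectra.
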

For simplicity, we denote $\bm{\sigma}^j = \bm{\sigma}[T(\bm{u}_j)]$. Based on the proposition above, we have,
\begin{equation}\label{eq RCMRA5}
  \bm{W}_j^{\epsilon} = \bm{U}_j \text{diag}\big[\nabla h^{\epsilon}(\bm{\sigma}^j)\big] \bm{U}_j^H,
\end{equation}
where $T(\bm{u}_j) = \bm{U}_j \text{diag}(\bm{\sigma}^j) \bm{U}_j^H$ is the SVD of $T(\bm{u}_j)$. It should be noted that, when the Logarithm or the $\ell_p$ norm penalty is employed, equation (\ref{eq RCMRA5}) can be accelerated as $\bm{W}_j^{\epsilon} = (T(\bm{u}_j) + \epsilon \bm{I})^{-1}$ or $\bm{W}_j^{\epsilon} = \epsilon [T(\bm{u}_j)]^{\epsilon-1}$.

As mentioned before, $\epsilon$ controls the relationship between $g^{\epsilon}(\sigma)$ and $g_0(\sigma)$ or $g_1(\sigma)$. In particular, with a small value of $\epsilon$, $g^{\epsilon}(\sigma)$ approaches $g_0(\sigma)$ but suffers from many local minima, whereas a large value of $\epsilon$ pushes $g^{\epsilon}(\sigma)$ toward the convex $g_1(\sigma)$, which however has the worst fitting to $g_0(\sigma)$. Consequently, we should start with a large value of $\epsilon$, and gradually decrease it during the iteration to reduce the risk of getting trapped in local minima. Moreover, in each iteration, we also define the weight $\bm{W}_j^{\epsilon}$ using the latest solution for avoiding local minima.

After obtaining $T(\bm{u})$, similar to CMRA, the DOAs $\bm{\theta}$ can be easily determined by using the subspace-based methods or the Vandermonde decomposition lemma (see \cite{yangzai2014SPA} for more details).
Before closing this subsection, we summarize the proposed ICMRA in Algorithm \ref{algorithm ICMRA}.
\begin{algorithm}[h]
\caption{ICMRA}
\label{algorithm ICMRA}
\begin{algorithmic}
\REQUIRE measurement data $\bm{X}_{\bm{\Omega}}$, $\beta$.
\STATE \textbf{Initialization}: $j=0, \bm{u}_0 = \bm{0}, \epsilon$.\\
\REPEAT
\STATE $\quad$Update the weight $\bm{W}_{j}^{\epsilon}$ by equation (\ref{eq RCMRA5});
\STATE $\quad$Update $\bm{u}_{j+1}$ by solving problem (\ref{eq RCMRA4});
\STATE $\quad$ $\epsilon = \frac{\epsilon}{\delta}\ (\delta>1)$,
\STATE $\quad$ $j=j+1$,
\STATE \textbf{end}
\UNTIL{Convergence}
\ENSURE $\hat{\bm{\theta}}$ by using the Vandermonde decomposition lemma.
\end{algorithmic}
\end{algorithm}
\begin{remark}
It should be noted that, when starting with $\bm{u}_0 = \bm{0}$, the weight $\bm{W}_0^{\epsilon} = c\bm{I}$ where $c$ is a positive nonzero constant. Hence the first iteration of ICMRA reduces to the CMRA method. From the second iteration, the weight $\bm{W}_j^{\epsilon}$ is determined based on the solution of the previous iteration and thus a reweighted CMRA is carried out in each iteration.
\end{remark}

\begin{remark}
The ICMRA is able to enhance the sparsity and give a better performance than CMRA does, which can be justified as follows.

The problem (\ref{eq RCMRA4}) can be written as (the two constraints in (\ref{eq RCMRA4}) are omitted for brevity),
\begin{equation}\label{eq RCMRA6}
\begin{split}
    &\quad \min_{\bm{u}}\ \text{tr}\big[\bm{W}_j^{\epsilon}T(\bm{u})\big]\\
    &= \min_{p_k,\theta_k}\ \text{tr}[\bm{W}_j^{\epsilon} \sum_k p_k \bm{a}(\theta_k)\bm{a}^H(\theta_k)]\\
    &= \min_{p_k,\theta_k}\ \sum_k \bm{a}^H(\theta_k) \bm{W}_j^{\epsilon} \bm{a}(\theta_k) p_k\\
    &= \min_{p_k,\theta_k}\ \sum_k \omega_k^{-1}p_k \quad \text{s.t.}\quad \omega_k = \frac{1}{\bm{a}^H(\theta_k) \bm{W}_j^{\epsilon} \bm{a}(\theta_k)},
\end{split}
\end{equation}
where $\bm{a}(\theta_k)$ denotes the steering vector of the coarray of the original array with a signal impinging from direction of $\theta_k$. Recall that $\bm{W}_j^{\epsilon} = (T(\bm{u}_j) + \epsilon \bm{I})^{-1}$ when the Logarithm penalty is used, hence $\omega_k$ can be regarded as the power spectrum of the Capon's beamforming if $T(\bm{u}_j)$ is interpreted as the covariance matrix of the noiseless array output and $\epsilon$ as the noise power. Therefore, the weights $\{\omega_k\}$ lead to finer details of the power spectrum in the current iteration and hence enhance the sparsity \cite{yang2016RAM}. Furthermore, since $g_{\ell_p}^{\epsilon}(\sigma^j)$ and $g_{\text{lap}}^{\epsilon}(\sigma^j)$ have similar sparsity enhancing properties to $g_{\text{log}}^{\epsilon}(\sigma^j)$ as shown in Fig. \ref{Fig illustration of g0 g1}, they can also be used for performance improvement, which will be shown in simulations.
\end{remark}

\subsection{Convergence Analysis}

In this section, we give the convergence analysis of ICMRA for (\ref{eq RCMRA2}) as the following theorem.
\begin{theorem}\label{theorem convergent}
  Denote $\bm{u}_{j}$ as the solution of (\ref{eq RCMRA4}) in the $(j-1)$-th iteration. Then, the sequence $\{\bm{u}_j\}$ satisfies the following properties:\\
  (1) $\mathcal{G}^{\epsilon}[T(\bm{u}_j)]$ is monotonically decreasing and bounded as $j\rightarrow +\infty$.\\
  (2) The sequence $\{\bm{u}_j\}$ converges to a local minimum of (\ref{eq RCMRA2}).
\end{theorem}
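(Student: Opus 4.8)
The plan is to read this as a standard majorization–minimization (MM) convergence argument built on the tangent-plane bound (\ref{eq RCMRA3}), with $\epsilon$ held fixed (the decreasing-$\epsilon$ continuation is a heuristic and is not part of the convergence claim for (\ref{eq RCMRA2})). First I would make the surrogate explicit,
\[
Q(\bm{u}\mid\bm{u}_j)=\mathcal{G}^{\epsilon}[T(\bm{u}_j)]+\text{tr}\big[\bm{W}_j^{\epsilon}T(\bm{u}-\bm{u}_j)\big],\qquad \bm{W}_j^{\epsilon}=\nabla\mathcal{G}^{\epsilon}[T(\bm{u}_j)],
\]
and record its two defining properties: it \emph{touches} the objective, $Q(\bm{u}_j\mid\bm{u}_j)=\mathcal{G}^{\epsilon}[T(\bm{u}_j)]$, and by the concavity inequality (\ref{eq RCMRA3}) it \emph{majorizes} it, $\mathcal{G}^{\epsilon}[T(\bm{u})]\le Q(\bm{u}\mid\bm{u}_j)$ for every feasible $\bm{u}$. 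Since the terms not depending on $\bm{u}$ do not affect the $\arg\min$, minimizing $Q(\cdot\mid\bm{u}_j)$ over the (iteration-independent) feasible set is exactly problem (\ref{eq RCMRA4}), so $\bm{u}_{j+1}$ is its minimizer.

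For part (1) I would chain these facts,
\[
\mathcal{G}^{\epsilon}[T(\bm{u}_{j+1})]\le Q(\bm{u}_{j+1}\mid\bm{u}_j)\le Q(\bm{u}_j\mid\bm{u}_j)=\mathcal{G}^{\epsilon}[T(\bm{u}_j)],
\]
where the middle step is optimality of $\bm{u}_{j+1}$, giving monotone decrease. Boundedness below follows penalty-by-penalty from Table \ref{table penalties}: $g^{\epsilon}_{\text{lap}}\in[0,1)$ and $g^{\epsilon}_{\ell_p}\ge 0$ are nonnegative, while $g^{\epsilon}_{\text{log}}(x)=\ln(x+\epsilon)\ge\ln\epsilon$, so $\mathcal{G}^{\epsilon}\ge N\ln\epsilon$. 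A monotone sequence of reals bounded below converges, which settles (1).

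For part (2) I would first note the feasible set is compact: both constraints are closed, and the data-fit ball forces $T(\bm{u})$, hence $\bm{u}$, to stay near $\hat{\bm{R}}_{\bm{\Omega}}$, so $\{\bm{u}_j\}$ has a convergent subsequence $\bm{u}_{j_k}\to\bm{u}^\star$. The structural fact that does the work is that the surrogate is the first-order Taylor expansion, so its gradient matches the true gradient at the expansion point, $\nabla_{\bm{u}}Q(\bm{u}\mid\bm{u}_j)\big|_{\bm{u}=\bm{u}_j}=\nabla\mathcal{G}^{\epsilon}[T(\bm{u}_j)]$. Writing the KKT conditions of the convex problem (\ref{eq RCMRA4}) satisfied by $\bm{u}_{j+1}$ and passing to the limit along the subsequence, while invoking continuity of the gradient map $\bm{u}\mapsto\nabla\mathcal{G}^{\epsilon}[T(\bm{u})]$ (explicit as $(T(\bm{u})+\epsilon\bm{I})^{-1}$ or $\epsilon[T(\bm{u})]^{\epsilon-1}$), the surrogate KKT system collapses onto the KKT system of the original nonconvex problem (\ref{eq RCMRA2}) at $\bm{u}^\star$. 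Hence every limit point is a stationary point of (\ref{eq RCMRA2}).

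The main obstacle is upgrading \emph{every subsequential limit is stationary} to convergence of the whole sequence, as the statement demands. Because the surrogate is only linear (a tangent plane, not strongly convex), the per-step objective decrease cannot be lower-bounded by $\|\bm{u}_{j+1}-\bm{u}_j\|^2$, so the successive differences are not directly forced to zero, and the continuity of $\nabla\mathcal{G}^{\epsilon}$ is itself delicate at vanishing singular values for the $\ell_p$ penalty. I would close this using the MM/BSUM machinery of \cite{ICRA2014Mohammadi}: the bounded sequence with vanishing surrogate decrease and a gradient-matching majorizer converges as a whole under the mild additional assumption that the stationary points are isolated (or that a Kurdyka--{\L}ojasiewicz inequality holds). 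I would state this assumption explicitly, and I would also be candid that MM only certifies stationarity in general, so the identification of $\bm{u}^\star$ as a genuine \emph{local minimum} (rather than a saddle point) is what the paper adopts as its reading of the first-order condition.
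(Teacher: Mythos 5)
Your part (1) is essentially the paper's argument: your MM sandwich is the same monotone-decrease chain the paper obtains from concavity plus optimality of $\bm{u}_{j+1}$ in (\ref{eq RCMRA4}), and your penalty-by-penalty lower bound is actually more careful than the paper's, which asserts $\mathcal{G}^{\epsilon}\geq 0$ (false for the Logarithm penalty when $\epsilon<1$; your $N\ln\epsilon$ bound is the correct fix). The genuine gap is in part (2), and it is precisely the claim you use to justify reaching for heavier machinery: you assert that because the surrogate is linear, the per-step objective decrease cannot be lower-bounded by $\|\bm{u}_{j+1}-\bm{u}_j\|^2$. That is wrong in this setting, and the missed idea is the paper's key lemma (Lemma \ref{lemma G is concave}, quoted from \cite{ICRA2014Mohammadi}): for the penalties considered, $g^{\epsilon}$ is strictly concave with $g^{\epsilon''}(x)\leq -m_u<0$ on bounded sets, so the objective lies \emph{below} its tangent plane by a quadratic amount,
\begin{equation*}
\mathcal{G}^{\epsilon}[T(\bm{u}_{j+1})]-\mathcal{G}^{\epsilon}[T(\bm{u}_j)]
\leq \text{tr}\big[\nabla\mathcal{G}^{\epsilon}[T(\bm{u}_j)]\,T(\bm{u}_{j+1}-\bm{u}_j)\big]
- \frac{m}{2}\big\|T(\bm{u}_{j+1}-\bm{u}_j)\big\|_F^2.
\end{equation*}
The quadratic term comes from the curvature of the \emph{objective}, not from any strong convexity of the surrogate; since optimality of $\bm{u}_{j+1}$ makes the trace term nonpositive, each step decreases $\mathcal{G}^{\epsilon}$ by at least $\frac{m}{2}\|T(\bm{u}_{j+1}-\bm{u}_j)\|_F^2$. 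Summing and using boundedness from part (1) gives $\sum_j\|T(\bm{u}_{j+1}-\bm{u}_j)\|_F^2<\infty$, hence $T(\bm{u}_{j+1}-\bm{u}_j)\to\bm{0}$; this is how the paper gets convergence before matching the KKT systems of (\ref{eq RCMRA4}) and (\ref{eq RCMRA2}) at the limit. So the ingredient you declared unavailable is exactly what drives the paper's proof, and with it your appeal to Kurdyka--{\L}ojasiewicz or isolated stationary points is not needed for the step-vanishing part.

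That said, your two residual criticisms are legitimate and in fact expose soft spots in the paper's own argument: vanishing successive differences do not by themselves imply the whole sequence converges (the paper simply asserts this), and stationarity/KKT at the limit does not by itself certify a local minimum rather than a saddle (the paper's closing sentence is an assertion, not a proof). Your proposal is more honest on those two points; it just should not have traded away the strong-concavity lemma to get there.
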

\begin{proof}
Please see Appendix \ref{appendix convergent}.
\end{proof}
Theorem \ref{theorem convergent} shows that by iteratively solving (\ref{eq RCMRA4}), we can finally obtain a local minimum of (\ref{eq RCMRA2}).

\section{Connection to Prior Arts}
\label{sec connection to SSR and atomic norm}

\subsection{Connection to the sparsity-based methods}
\label{subsection connection to SSR}

We start by extending the last equality of (\ref{eq RCMRA6}) to a convex one by using the sparse representation theory. Suppose that the whole angle space $[-90^{\circ},90^{\circ}]$ is divided by using a uniform grid of $N'$ points $\bm{\vartheta}' \triangleq \{\vartheta_1,\cdots,\vartheta_{N'}\}$ and further assume that the true DOAs $\bm{\theta}$ lie exactly on the grid, i.e., $\bm{\theta} \subset \bm{\vartheta}'$. Denote the corresponding manifold matrix and power by $\bm{A}'_{\bm{\Omega}} \in \mathbb{C}^{M\times N'}$ and $\bm{p}'=[p_1,\cdots,p_{N'}] \in \mathbb{R}^{N' \times 1}$, respectively, and $\tilde{\bm{A}}'_{\bm{\Omega}} = \bar{\bm{A}}'_{\bm{\Omega}} \odot \bm{A}'_{\bm{\Omega}}$ and $\bm{r}_{\bm{\Omega}} = \text{vec}(\hat{\bm{R}}_{\bm{\Omega}} - \sigma\bm{I})$. The resulting sparse model is shown as,
\begin{equation}\label{eq SSR model}
  \min_{\bm{p}' \succeq \bm{0}}\ \sum_n \frac{p_n}{\omega_n^{(j)}} \ \text{s.t.}\ \Big\|\bm{Q}\big( \bm{r}_{\bm{\Omega}}-\tilde{\bm{A}}'_{\bm{\Omega}} \bm{p}' \big)\Big\|_2^2 \leq \gamma,
\end{equation}
where $\omega_n^{(j)}$ denotes the $n$-th weight in the $j$-th iteration.
It is easy to see that model (\ref{eq SSR model}) is a reweighted $\ell_1$ norm minimization model where the weights $\{\omega_n\}$ are used to enhance the sparsity of the solution and improve the reconstruction performance \cite{reweightedl1norm2014TSP}. In particular, to promote sparsity, $\omega_n$ should be selected such that it has a large value when $p_n\neq 0$ and significantly smaller one elsewhere. With this strategy, after $\bm{p}'_{j}$ is determined in the $(j-1)$-th iteration, the weight $\omega_n^{(j)}$ can be obtained as,
\begin{equation}\label{eq weight update for SSR}
  \omega_n^{(j)} = \frac{1}{\bm{a}^H(\vartheta_n) \bm{W}_j^{\epsilon} \bm{a}(\vartheta_n)},
\end{equation}
where $\bm{W}_j^{\epsilon}$ can be computed by (\ref{eq RCMRA5}) and the estimated covariance $T(\bm{u}_j)$ of (\ref{eq RCMRA5}) in each iteration is obtained as $T(\bm{u}_j) = \sum_n p_{n}^{(j)} \bm{a}(\vartheta_{n})\bm{a}^H(\vartheta_{n})$. $\{\omega_n\}$ can be regarded as the power spectrum of the Capon's beamforming, which satisfies the selection condition of $\omega_n$ as shown in Section \ref{sec reweighted CMRA B}. By comparing the sparsity-based model (\ref{eq SSR model}) with (\ref{eq RCMRA6}), it can be easily concluded that model (\ref{eq RCMRA4}) is equivalent to (\ref{eq SSR model}) with $N' \rightarrow \infty$. In other words, the sparsity-based method (\ref{eq SSR model}) is a discretized version of model (\ref{eq RCMRA4}).
Finally, it is interesting to note that when $\bm{W}_j^{\epsilon} = \bm{I}$, i.e., $\omega_n = \frac{1}{N}$, model (\ref{eq SSR model}) is the sparsity-based model proposed in \cite{my2016TVT}.
It should be noted that since the weights $\{\omega_{n}\}$ enhance the sparsity, the reweighted $\ell_1$ norm iterative model (\ref{eq SSR model}) is expected to be superior to the $\ell_1$ norm model in \cite{my2016TVT}.

\subsection{Connection to Atomic Norm}

In this section, we attempt to interpret (\ref{eq RCMRA4}) as an ANM method. To do this, let us first define a weighted continuous dictionary,
\begin{equation}\label{eq weighted atomic dictionary}
  \mathcal{A}^{\omega} = \big\{\bm{B}^{\omega}(\theta)\big\} = \bigg\{ \frac{1}{N}\omega(\theta)\bm{a}(\theta)\bm{a}^H(\theta): \theta\in [-90^{\circ},90^{\circ}] \bigg\},
\end{equation}
where $\omega(\theta)\geq 0$ is a weighting function. Based on (\ref{eq weighted atomic dictionary}), the weighted atomic norm of $T(\bm{u})$ is defined as,
\begin{equation}\label{eq weighted atomic norm of Tu}
\begin{split}
  &\quad\|T(\bm{u})\|_{\mathcal{A}^{\omega}}\\
  &= \text{inf} \Big\{ \sum_k c_k^{\omega}: T(\bm{u})=\sum_k c_k^{\omega}\bm{B}^{\omega}(\theta_k),\bm{B}^{\omega}(\theta)\in \mathcal{A}^{\omega} \Big\}\\
  &= \text{inf} \Big\{ \sum_k c_k^{\omega}: T(\bm{u})=\sum_k c_k^{\omega}\omega(\theta_k)\bm{B}(\theta_k),\bm{B}(\theta)\in \mathcal{A} \Big\}\\
  &= \text{inf} \Big\{ \sum_kc_k\omega^{-1}(\theta_k): T(\bm{u})=\sum_kc_k\bm{B}(\theta_k),\bm{B}(\theta_k)\in \mathcal{A} \Big\}\\
  &= \sum_k N \omega^{-1}_k p_k\\
  &= N \text{tr}[\bm{W}^{\epsilon}T(\bm{u})],
\end{split}
\end{equation}
which indicates that model (\ref{eq RCMRA4}) is equivalent to the following ANM model,
\begin{equation}\label{eq ANM model of ICMRA}
\begin{split}
    &\min_{\bm{u}}\ \|T(\bm{u})\|_{\mathcal{A}^{\omega}} \\ &\quad\text{s.t.}\ \big\| \bm{Q} \text{vec}(\bm{E}_{\bm{\Omega}}) \big\|_2^2 \leq \beta^2,\ T(\bm{u}) \geq \bm{0}.
\end{split}
\end{equation}
According to the third equation in (\ref{eq weighted atomic norm of Tu}), an atom $\bm{B}(\theta)$, $\theta\in [-90^{\circ},90^{\circ}]$, is selected with a high probability if $\omega(\theta)$ is larger, which is the same conclusion as shown in Section \ref{subsection connection to SSR}.

\section{Computationally Efficient Implementations}
\label{sec computationally efficient implementations}

Here we present two implementation algorithms to speed up the convergence of the proposed method.

\subsection{Optimization via Duality}
\label{subsec duality}

We have empirically observed that a faster speed can be achieved when we solve the dual problem of model (\ref{eq RCMRA4}) as shown in the following.

First, by using the substitution $\bm{Y}= T(\bm{u})$, problem (\ref{eq RCMRA4}) can be reformulated as,
\begin{equation}\label{IV primal problem}
\begin{split}
  \min_{\bm{u},\bm{Y}} \ \text{tr}[\bm{W}_j^{\epsilon}\bm{Y})\big]\\
  \text{s.t.} \  \bm{Y} - T(\bm{u}) &= \bm{0},\\
  T(\bm{u}) &\geq \bm{0},\\
  \left\| \bm{Q} \text{vec} \big(\hat{\bm{R}}_{-\sigma} - \bm{Y}_{\bm{\Omega}}\big) \right\|_2^2 &\leq \beta^2,
\end{split}
\end{equation}
where $\hat{\bm{R}}_{-\sigma} = \hat{\bm{R}}_{\bm{\Omega}} - \sigma\bm{I}, \bm{Y}_{\bm{\Omega}} = \bm{\Gamma}\bm{Y}\bm{\Gamma}^T$.
Let $\bm{\Lambda}$, $\bm{V}$ and $\mu$ be the Lagrangian multipliers of the three constraints of (\ref{IV primal problem}), respectively. We can obtain the Lagrangian associated with the problem (\ref{IV primal problem}) as,
\begin{equation}\label{IV Lagrangian}
\begin{split}
  &\quad\ \mathcal{L}(\bm{u},\bm{Y},\bm{\Lambda},\bm{V},\mu)\\
  &= \text{tr}[\bm{W}_j^{\epsilon}\bm{Y}]-\text{tr}\left[\bm{\Lambda}\left(\bm{Y}-T(\bm{u})\right)\right] - \text{tr}\left[\bm{V}T(\bm{u})\right]\\
  &\quad + \mu \left\| \bm{Q}\text{vec}(\hat{\bm{R}}_{-\sigma} - \bm{Y}_{\bm{\Omega}}) \right\|^2_2 - \mu \beta^2\\
  &= \text{tr}\left[\left(\bm{W}_{j\bm{\Omega}}^{\epsilon}-\bm{\Lambda}_{\bm{\Omega}}\right)
  \bm{Y}_{\bm{\Omega}}\right] + \left(\bm{\omega}_{\overline{\bm{\Omega}}}-\bm{\lambda}_{\overline{\bm{\Omega}}}\right)^H \bm{y}_{\overline{\bm{\Omega}}}\\
  &\quad+ \text{tr}\left[\left(\bm{\Lambda}-\bm{V}\right)T(\bm{u})\right] + \mu \left\| \bm{Q}\text{vec}\left(\hat{\bm{R}}_{-\sigma} - \bm{Y}_{\bm{\Omega}}\right) \right\|^2_2 - \mu \beta^2,
\end{split}
\end{equation}
where $\bm{W}_{j\bm{\Omega}}^{\epsilon} = \bm{\Gamma} \bm{W}_j^{\epsilon} \bm{\Gamma}^T$ and $\bm{\Lambda}_{\bm{\Omega}} = \bm{\Gamma} \bm{\Lambda} \bm{\Gamma}^T$. Also, $\bm{\omega}_{\overline{\bm{\Omega}}}, \bm{\lambda}_{\overline{\bm{\Omega}}}$ and $\bm{y}_{\overline{\bm{\Omega}}}$ are the vectors composed of the entries of $\bm{W}_j^{\epsilon}, \bm{\Lambda} $ and $ \bm{Y}$, whose rows and columns are indexed by $\overline{\bm{\Omega}}$, respectively, in which $\overline{\bm{\Omega}} = \{ 1,\cdots,N \} - \bm{\Omega}$. Then, the Lagrange dual with respect to (\ref{IV primal problem}) can be easily formulated as follows,
\begin{equation}\label{IV Lagrange dual}
\begin{split}
  \mathcal{G} &= \min_{\bm{u},\bm{Y}} \max_{\bm{\Lambda},\bm{V},\mu} \mathcal{L}(\bm{u},\bm{Y},\bm{\Lambda},\bm{V},\mu)\\
  &= \max_{\bm{\Lambda},\bm{V},\mu} \min_{\bm{u},\bm{Y}} \mathcal{L}(\bm{u},\bm{Y},\bm{\Lambda},\bm{V},\mu)\\
  &= \max_{\bm{\Lambda},\bm{V},\mu} -\frac{1}{4\mu} \left\| \bm{Q}^{-H}\text{vec}(\bm{W}_{j\bm{\Omega}}^{\epsilon}-\bm{\Lambda}_{\bm{\Omega}}) \right\|^2_2 - \mu\beta^2\\
  &\quad\quad\quad\quad + \text{vec}\big(\hat{\bm{R}}_{-\sigma}\big)^H \text{vec}(\bm{W}_{j\bm{\Omega}}^{\epsilon}-\bm{\Lambda}_{\bm{\Omega}})\\
  &\quad\ \text{s.t.} \begin{cases}
        \bm{V} \geq \bm{0},\\
        T^*(\bm{\Lambda} - \bm{V}) = \bm{0},\\
        \bm{\lambda}_{\overline{\bm{\Omega}}} = \bm{\omega}_{\overline{\bm{\Omega}}},
        \end{cases}
\end{split}
\end{equation}
where $T^*(\bm{V}) = [v_{-(N-1)}, \cdots, v_{N-1}]^T$, with $v_n$ being the sum of the $n$-th diagonal of $\bm{V}\in \mathbb{C}^{N\times N}$. The second equality in (\ref{IV Lagrange dual}) holds because of strong duality \cite{boyd2004convex}. Then by noting that $\frac{1}{4\mu} \left\| \bm{Q}^{-H}\text{vec}(\bm{W}_{j\bm{\Omega}}^{\epsilon}-\bm{\Lambda}_{\bm{\Omega}}) \right\|^2_2 + \mu\beta^2 \geq \beta \left\| \bm{Q}^{-H}\text{vec}(\bm{W}_{j\bm{\Omega}}^{\epsilon}-\bm{\Lambda}_{\bm{\Omega}}) \right\|_2$, the dual problem of (\ref{IV primal problem}) can be formulated as,
\begin{equation}\label{IV dual problem}
    \begin{split}
        \min_{\bm{\Lambda},\bm{V}}\ &\beta \left\| \bm{Q}^{-H}\text{vec}(\bm{W}_{j\bm{\Omega}}^{\epsilon}-\bm{\Lambda}_{\bm{\Omega}}) \right\|_2\\
        &- \text{vec}\big(\hat{\bm{R}}_{-\sigma}\big)^H \text{vec}(\bm{W}_{j\bm{\Omega}}^{\epsilon}-\bm{\Lambda}_{\bm{\Omega}})\\
        &\text{s.t.} \begin{cases}
        \bm{V} \geq \bm{0},\\
        T^*(\bm{\Lambda} - \bm{V}) = \bm{0},\\
        \bm{\lambda}_{\overline{\bm{\Omega}}} = \bm{\omega}_{\overline{\bm{\Omega}}},
        \end{cases}
    \end{split}
\end{equation}
which is also convex and can be solved using CVX. Since the strong duality holds, the solution to problem (\ref{eq RCMRA4}) can be obtained as the dual variable of $\bm{V}$. As a result, the reweighted algorithm can be iteratively implemented and the DOAs can be estimated.

\subsection{A Fast Implementation for ICMRA}
\label{sec FICMRA}

Although solving the dual problem can save computations to some extent, the employed CVX solver is still time-consuming. In this section, we propose a computationally efficient method by deriving a closed-form solution.

In our model (\ref{eq RCMRA4}), the covariance matrix $T(\bm{u})$ is constrained to be positive semidefinite, which is almost sure with moderate or large number of snapshots. The research in \cite{AML1999LiHBTSP} indicates that $L\geq 15$ is large enough to ensure that the estimated covariance matrix is positive semidefinite in a $5$-element ULA case. Here we focus on moderate/large values of $L$, and thus drop the constraint $T(\bm{u})\geq \bm{0}$ in (\ref{eq RCMRA4}) for speed consideration and rewrite it into the Lagrangian form below,
\begin{equation}\label{eq Lagrangian of ICMRA}
  \min_{\bm{u}}\ \lambda\text{tr}\big[\bm{W}_j^{\epsilon}T(\bm{u})\big] + \frac{1}{2}\Big\| \bm{Q} \text{vec}\big(\hat{\bm{R}}_{-\sigma}-T_{\bm{\Omega}}(\bm{u})\big) \Big\|_2^2,
\end{equation}
where $\lambda>0$ is a lagrangian multiplier. We then rewrite (\ref{eq Lagrangian of ICMRA}) as,
\begin{equation}\label{eq1 fast ICMRA}
\begin{split}
  &\quad \min_{\bm{u}}\ \lambda\text{tr}\big[\bm{W}_j^{\epsilon}T(\bm{u})\big] + \frac{1}{2}\Big\| \bm{Q} \text{vec}\big(\hat{\bm{R}}_{-\sigma}-T_{\bm{\Omega}}(\bm{u})\big) \Big\|_2^2\\
  &= \min_{\bm{u}}\ \lambda\text{tr}\big[\bm{W}_j^{\epsilon}T(\bm{u})\big] + \frac{1}{2}\Big[ \text{vec}^H\big(\hat{\bm{R}}_{-\sigma} - T_{\bm{\Omega}}(\bm{u})\big)\\
  &\qquad\quad\times \text{vec}\Big(\hat{\bm{R}}_{-\sigma}^{-1} \big(\hat{\bm{R}}_{-\sigma} - T_{\bm{\Omega}}(\bm{u})\big) \hat{\bm{R}}_{-\sigma}^{-1}\Big) \Big]\\
  &= \min_{\bm{u}}\ \lambda\text{tr}\big[\bm{W}_j^{\epsilon}T(\bm{u})\big] + \frac{1}{2}\Big[ \text{tr}\big[ T_{\bm{\Omega}}(\bm{u}) \hat{\bm{R}}_{-\sigma}^{-1} T_{\bm{\Omega}}(\bm{u}) \hat{\bm{R}}_{-\sigma}^{-1}\big]\\
  &\qquad\quad- 2\text{tr} \big[ T_{\bm{\Omega}}(\bm{u}) \hat{\bm{R}}_{-\sigma} \big] \Big]\\
  &= \min_{\bm{u}}\ \text{tr}\big[ (\lambda \bm{W}_j^{\epsilon}-\bm{C}) T(\bm{u}) \big] + \frac{1}{2}\text{tr}\big[ T(\bm{u}) \bm{C} T(\bm{u}) \bm{C} \big],
\end{split}
\end{equation}
where $\bm{C} = \bm{\Gamma}^T \hat{\bm{R}}_{-\sigma}^{-1} \bm{\Gamma}$. By letting the derivative of the last objective function in (\ref{eq1 fast ICMRA}) with respect to $\bm{u}$ be zero, it can be shown that the optimal solution of (\ref{eq1 fast ICMRA}) satisfies the following equality,
\begin{equation}\label{eq2 fast ICMRA}
  T^*(\bm{C}-\lambda\bm{W}_j^{\epsilon}) = T^*(\bm{C}T(\bm{u})\bm{C}) .
\end{equation}
Clearly, (\ref{eq2 fast ICMRA}) is an $N$-variate linear equation which can be solved by the following procedure. First, the right-hand term of (\ref{eq2 fast ICMRA}) can be transformed as,
\begin{equation}\label{eq3 fast ICMRA}
  T^*(\bm{C}T(\bm{u})\bm{C}) = \underbrace{\left[\begin{matrix} \bar{\bm{\Phi}}_{-1} \\ \bm{\Phi}
  \end{matrix}\right]}_{\bm{Z}} \left[\begin{matrix} \bm{u}_{-1} \\ \bar{\bm{u}}
  \end{matrix}\right],
\end{equation}
where
\begin{equation}\label{eq phi_-1 fast ICMRA}
  \bm{\Phi}_{-1} = \big[\bm{\Phi}_{N,:}^T,\cdots,\bm{\Phi}_{2,:}^T\big]^T,
\end{equation}
\begin{equation}\label{eq u_-1 fast ICMRA}
  \bm{u}_{-1} = [u_N,\cdots,u_2]^T,
\end{equation}
and
\begin{equation}\label{eq Tstar solution in 1D case Phi}
\bm{\Phi} = \left[
  \begin{matrix}
    T^{*T}\Big(\bm{C}_{:,\{1,\cdots,N\}} \bm{C}_{\{1,\cdots,N\},:}\Big)\\
    T^{*T}\Big(\bm{C}_{:,\{1,\cdots,N-1\}} \bm{C}_{\{2,\cdots,N\},:}\Big)\\
    \vdots\\
    T^{*T}\Big(\bm{C}_{:,1} \bm{C}_{N,:}\Big)
  \end{matrix}
  \right],
\end{equation}
with $\bm{C}_{:,\mathcal{A}}$ and $\bm{C}_{\mathcal{B},:}$ denoting the columns and rows of matrix $\bm{C}$ indexed by sets $\mathcal{A}$ and $\mathcal{B}$, respectively.
Then, by letting
\begin{equation}\label{eq Z_1 fast ICMRA}
  \bm{Z}_1 = \textit{fl}(\bm{Z}_{:,\{1:N\}}),
\end{equation}
and
\begin{equation}\label{eq Z_2 fast ICMRA}
  \bm{Z}_2 = [\bm{0}, \bm{Z}_{:,\{N+1:2N-1\}}],
\end{equation}
where the operator $\textit{fl}(\bm{Z})$ returns $\bm{Z}$ with row preserved and columns flipped in the left/right direction, we can rewrite (\ref{eq3 fast ICMRA}) into a more compact form,
\begin{equation}\label{eq Tstar solution in 1D case Z}
  T^*\Big(\bm{C} T(\bm{u}) \bm{C}\Big) = [\bm{Z}_1\ \bm{Z}_2] \left[
  \begin{matrix}
    \bm{u}\\
    \bar{\bm{u}}
  \end{matrix}
  \right].
\end{equation}
To obtain the estimate $\bm{u}$ which is complex-valued, we first transform (\ref{eq Tstar solution in 1D case Z}) into a real-valued matrix form as follows,
\begin{equation}\label{eq final equality in 1D case}
  \underbrace{\begin{bmatrix}
  \text{Re}(\bm{h})\\
  \text{Im}(\bm{h})
  \end{bmatrix}}_{\bm{h}_r} = \underbrace{\begin{bmatrix}
  \text{Re}(\bm{Z}_1+\bm{Z}_2) & \text{Im}(\bm{Z}_2-\bm{Z}_1)\\
  \text{Im}(\bm{Z}_1+\bm{Z}_2) & \text{Re}(\bm{Z}_1-\bm{Z}_2)
  \end{bmatrix}}_{\bm{Z}_r} \underbrace{\begin{bmatrix}
  \text{Re}(\bm{u})\\
  \text{Im}(\bm{u})
  \end{bmatrix}}_{\bm{u}_r},
\end{equation}
where $\bm{h}$ denotes the left-hand term of (\ref{eq2 fast ICMRA}) and the subscript $r$ denotes that the variable is real-valued. It is seen from (\ref{eq final equality in 1D case}) that $\bm{u}$ can be easily obtained from $\bm{u}_r = \bm{Z}_r^{\dag}\bm{h}_r$, where $\dag$ is the pseudo-inverse operator. Compared to using CVX, the derived closed-form solution can reduce the computational complexity to a great extent and hence the method is termed as fast ICMRA (FICMRA). Its performance and superiorities over other methods will be shown in the following section.

\section{Simulation Results}
\label{sec simulation}

In this section, we evaluate the performance of (F)ICMRA with comparison to CMRA \cite{my2016TVT}, MUSIC \cite{MUSIC1986schmidt}, $\ell_1$ singular value decomposition (L1-SVD) \cite{malioutov2005sparse} and sparse and parametric approach (SPA) \cite{yangzai2014SPA}. In our simulations, ICMRAs are implemented by solving the dual problem as discussed in Section \ref{subsec duality} and FICMRAs are carried out by the closed-form solution as derived in Section \ref{sec FICMRA}. We especially consider the closely adjacent signal cases which require high resolution. The nonconvex penalties employed in ICMRA are Logarithm, $\ell_p$ norm and Laplace and the corresponding proposed methods are $\text{(F)ICMRA}_{\text{log}}$, $\text{(F)ICMRA}_{\ell_p}$ and $\text{(F)ICMRA}_{\text{lap}}$.
For the initialization of ICMRA, we set $\bm{u}_0 = \bm{0}$ and $\epsilon_0 = 1$, hence the first iteration of ICMRA is equivalent to CMRA.
It is expected that other initializations may lead to different estimate $\hat{\bm{u}}$. From this point of view, we carry out an empirical analysis of initialization impact on the solution in Section \ref{sec initialization}, and show that starting with a zero vector is an appropriate attempt.
For the proposed methods, at each iteration, unless otherwise stated, the value of $\epsilon$ is reduced by $\epsilon_{j+1} = \frac{\epsilon_j}{\delta}$ with $\delta=2,10,10$ for Logarithm, $\ell_p$ and Laplace, respectively (the choice of $\delta$ will be discussed in Section \ref{sec simulation A}). The iteration stops if the maximum number of iterations, set to $20$, is reached, or the relative change of $\hat{\bm{u}}$ at two consecutive iterations is less than $10^{-4}$, i.e., $\frac{ \| \hat{\bm{u}}_{j+1}-\hat{\bm{u}}_j\|_F }{\|\hat{\bm{u}}_j\|_F} < 10^{-4}$. The number of sources is assumed to be unknown for the compared methods except MUSIC. The searching step of MUSIC is determined by the SNR, i.e., $\big( 10^{-\frac{\text{SNR}}{20}-1} \big)^{\circ}$. For the initialization of the three FICMRAs, $\lambda$ is set to $ 0.1$ which gives good performance empirically. Other settings of FICMRAs are the same as those of the corresponding ICMRAs.

\subsection{An illustration Example}
\label{sec simulation A}

\begin{figure}[!t]
\centering
\includegraphics[width=3in]{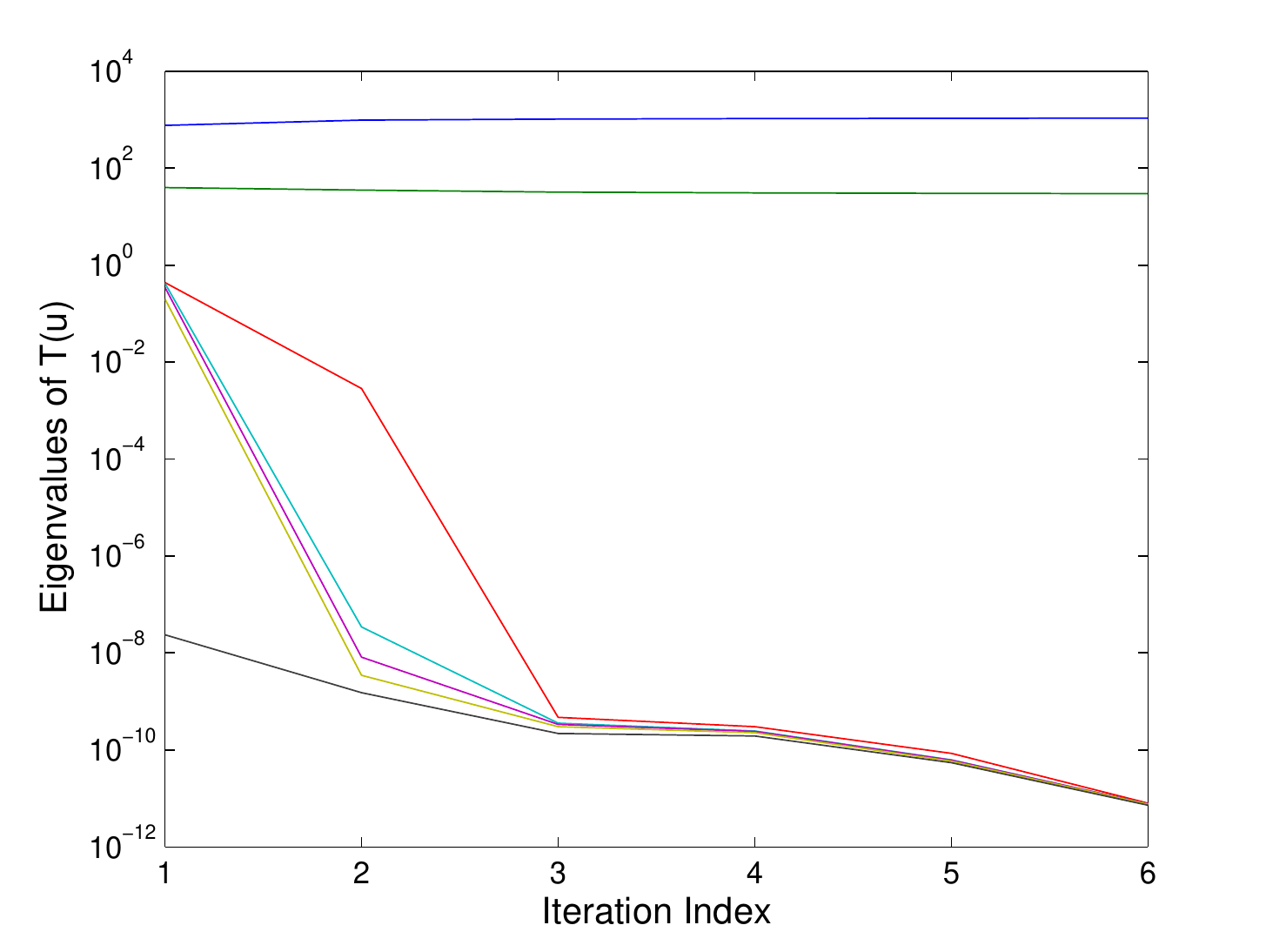}
\caption{The variation of eigenvalues of $T(\bm{u})$ with respect to the iteration index. The number of snapshots is set to $200$ and SNR$=20$dB.}
\label{fig eigenvalues}
\end{figure}

\begin{figure}[!t]
\begin{minipage}[b]{.48\linewidth}
  \centering
  \centerline{\includegraphics[width=1.9in]{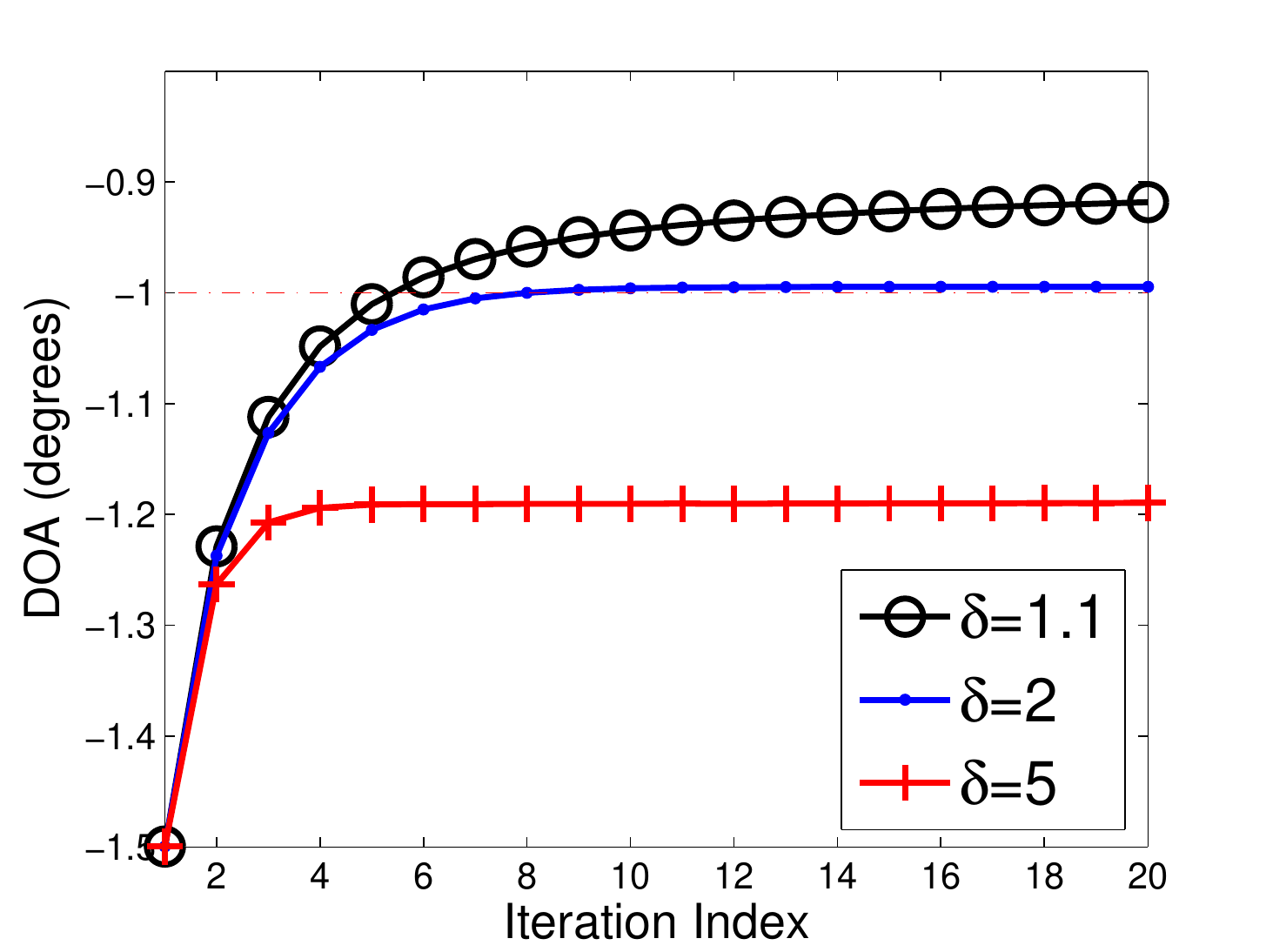}}
  \centerline{(a) Signal $1$}\medskip
\end{minipage}
\hfill
\begin{minipage}[b]{0.48\linewidth}
  \centering
  \centerline{\includegraphics[width=1.9in]{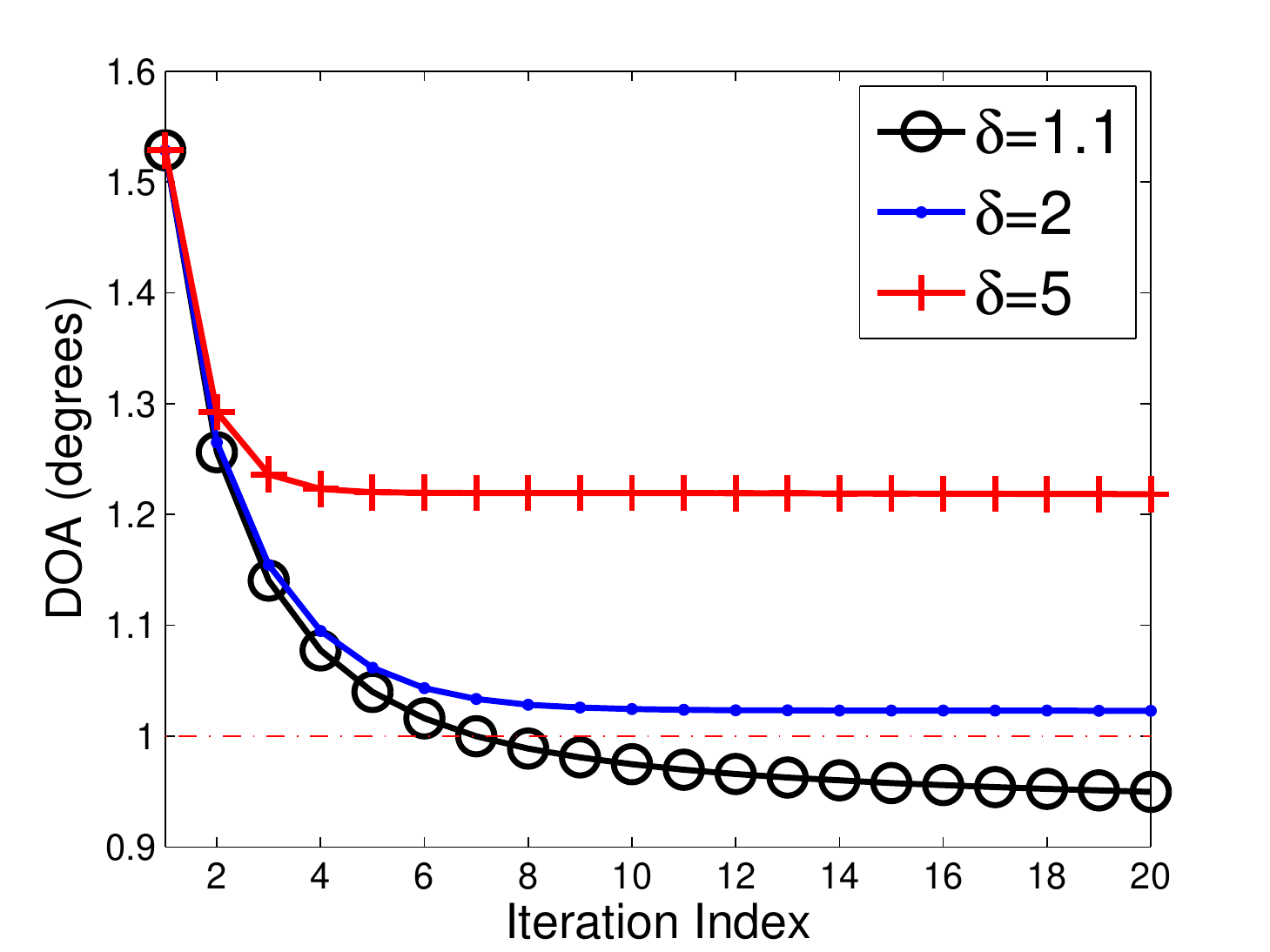}}
  \centerline{(b) Signal $2$}\medskip
\end{minipage}
\caption{Iteration results of $\text{ICMRA}_{\text{log}}$.}
\label{fig iteration}
\end{figure}

We first carry out a simple example to illustrate the iterative process and choose $\text{ICMRA}_{\text{log}}$ as the representative method. Assume two narrowband far-field signals impinge onto a $7$-element ULA from directions of $[-1^{\circ}, 1^{\circ}]$. Two hundred snapshots are collected for DOA estimation and the $\text{SNR}$ is set to $20$dB. To illustrate that our iterative procedure is able to promote the sparsity structure, we record the variation of eigenvalues of $T(\bm{u}_{j})$ with respect to the iteration index and plot them in Fig. \ref{fig eigenvalues}, in which different color curves denote different eigenvalues. Note that ICMRA is terminated after $10$ iterations and we only plot the eigenvalues of the first $6$ iterations for better visual effects. From Fig. \ref{fig eigenvalues} it can be seen that after the first iteration (a.k.a. the CMRA method), there exist two large eigenvalues, four moderate eigenvalues and only one extremely small eigenvalue. In other words, the estimated covariance matrix is singular but has at least rank-six. Subsequently, the moderate eigenvalues gradually decrease and approach zero (within numerical precision) after the third iteration while the other two large eigenvalues nearly remain unchanged. Hence by noting that the final covariance matrix estimated by ICMRA has a rank of two, which is equivalent to the number of signals, it can be concluded that the sparse structure is promoted. The ICMRA with two other penalties show similar performance and the details are omitted here.

\begin{figure*}[!t]
\centerline{
\subfloat[CMRA]{\includegraphics[width=1.8in]{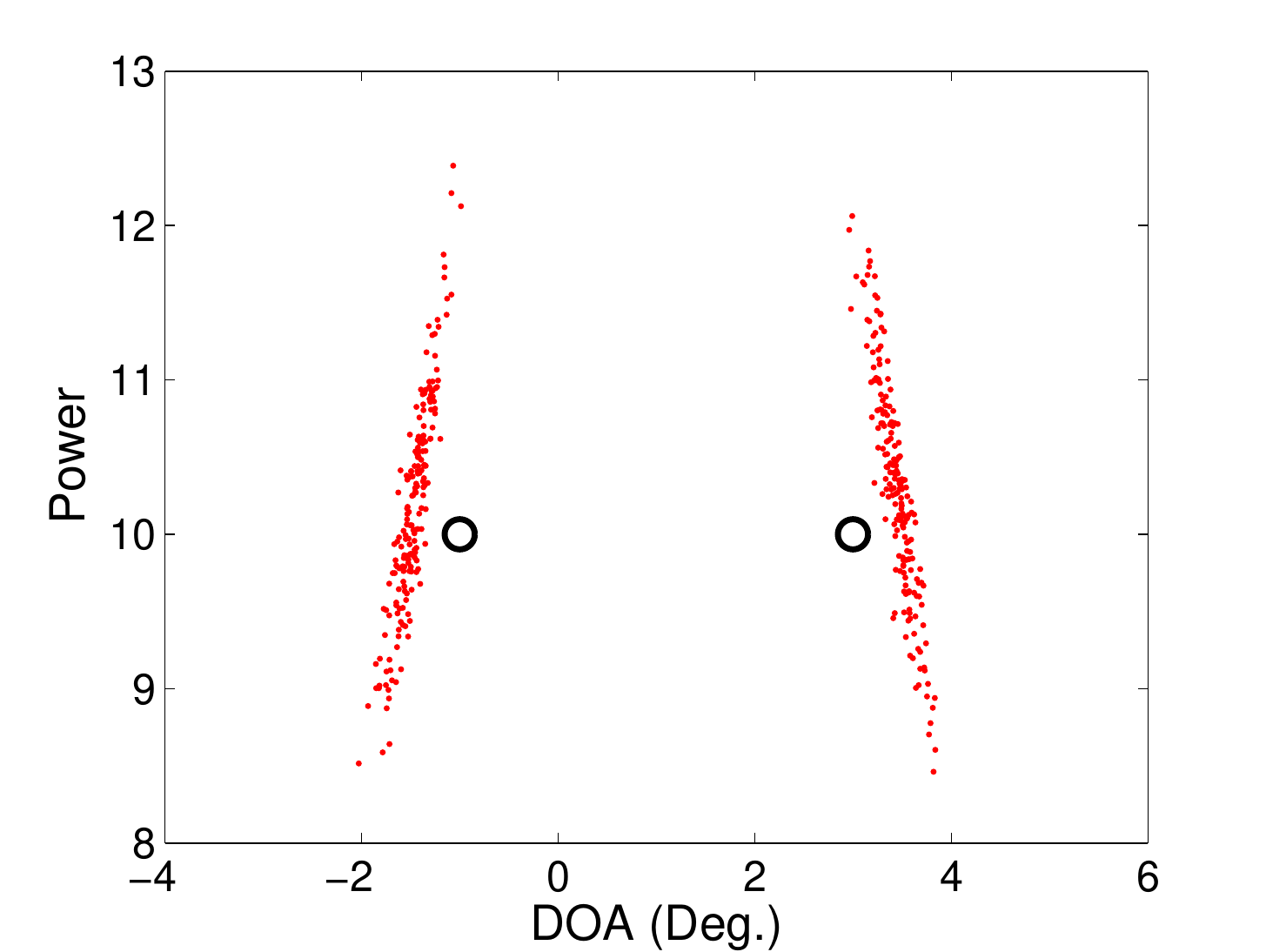}}
\hfil
\subfloat[$\text{ICMRA}_{\text{log}}$]{\includegraphics[width=1.8in]{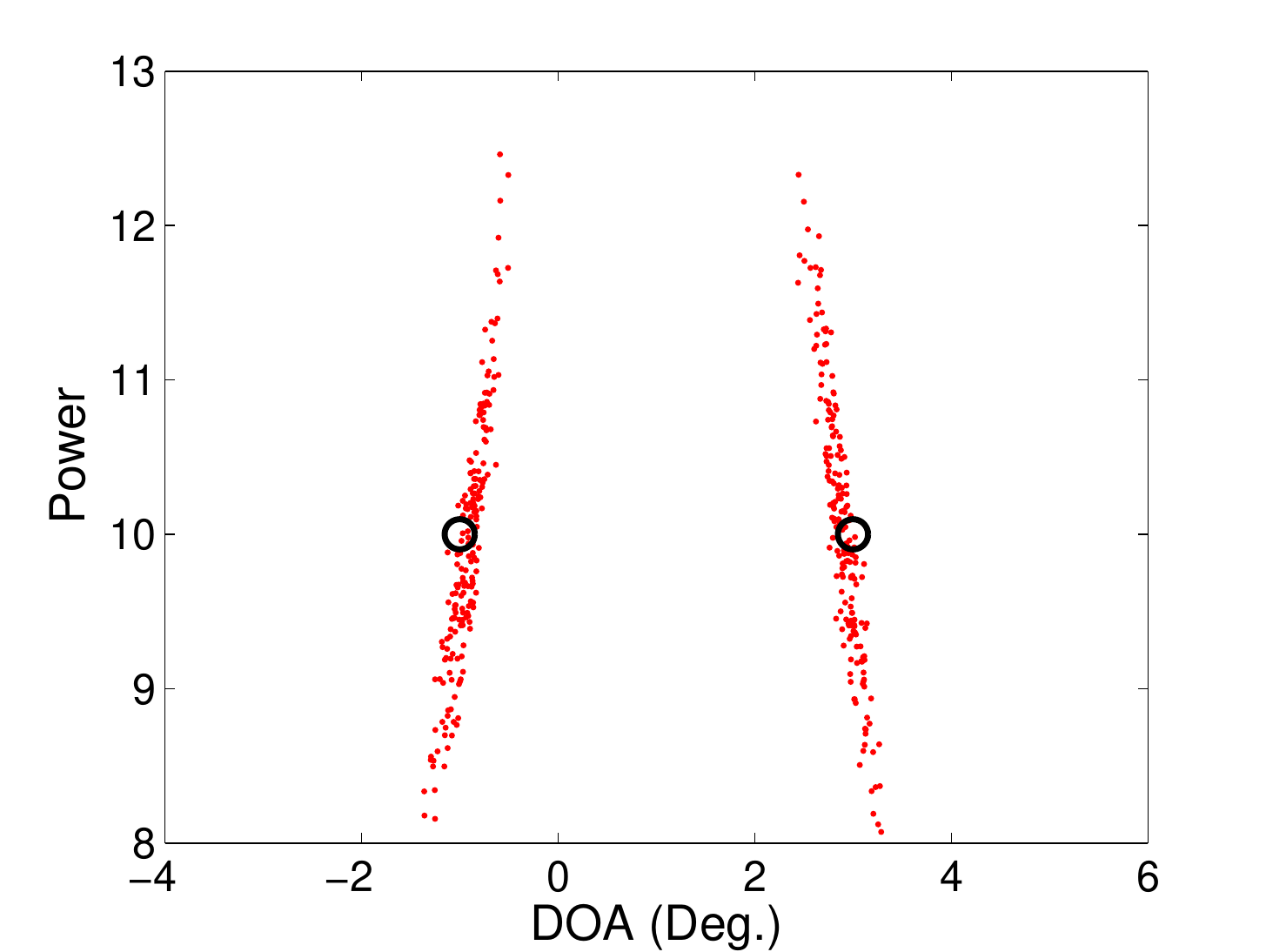}}
\hfil
\subfloat[$\text{ICMRA}_{\text{lp}}$]{\includegraphics[width=1.8in]{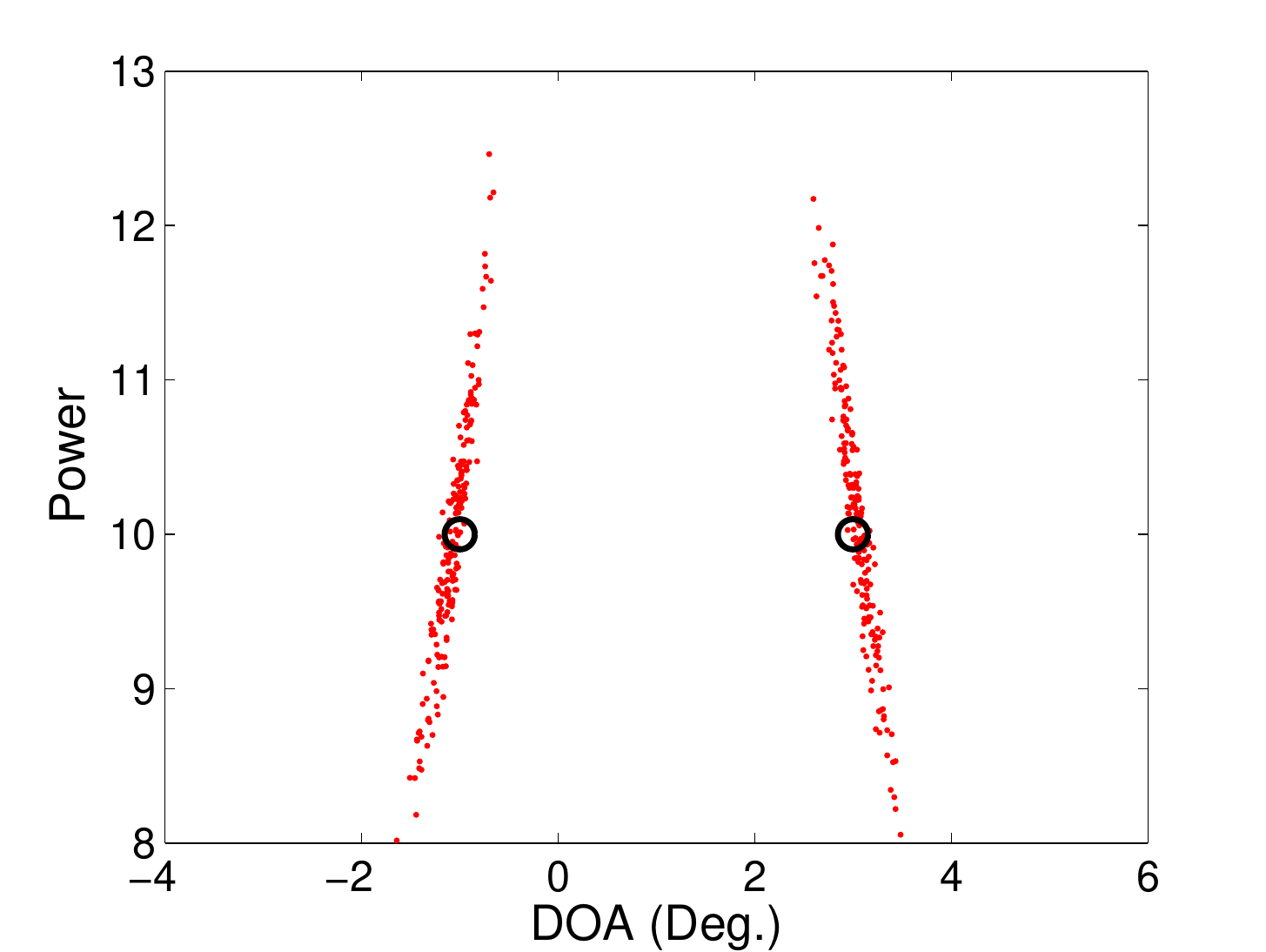}}
\hfil
\subfloat[$\text{ICMRA}_{\text{lap}}$]{\includegraphics[width=1.8in]{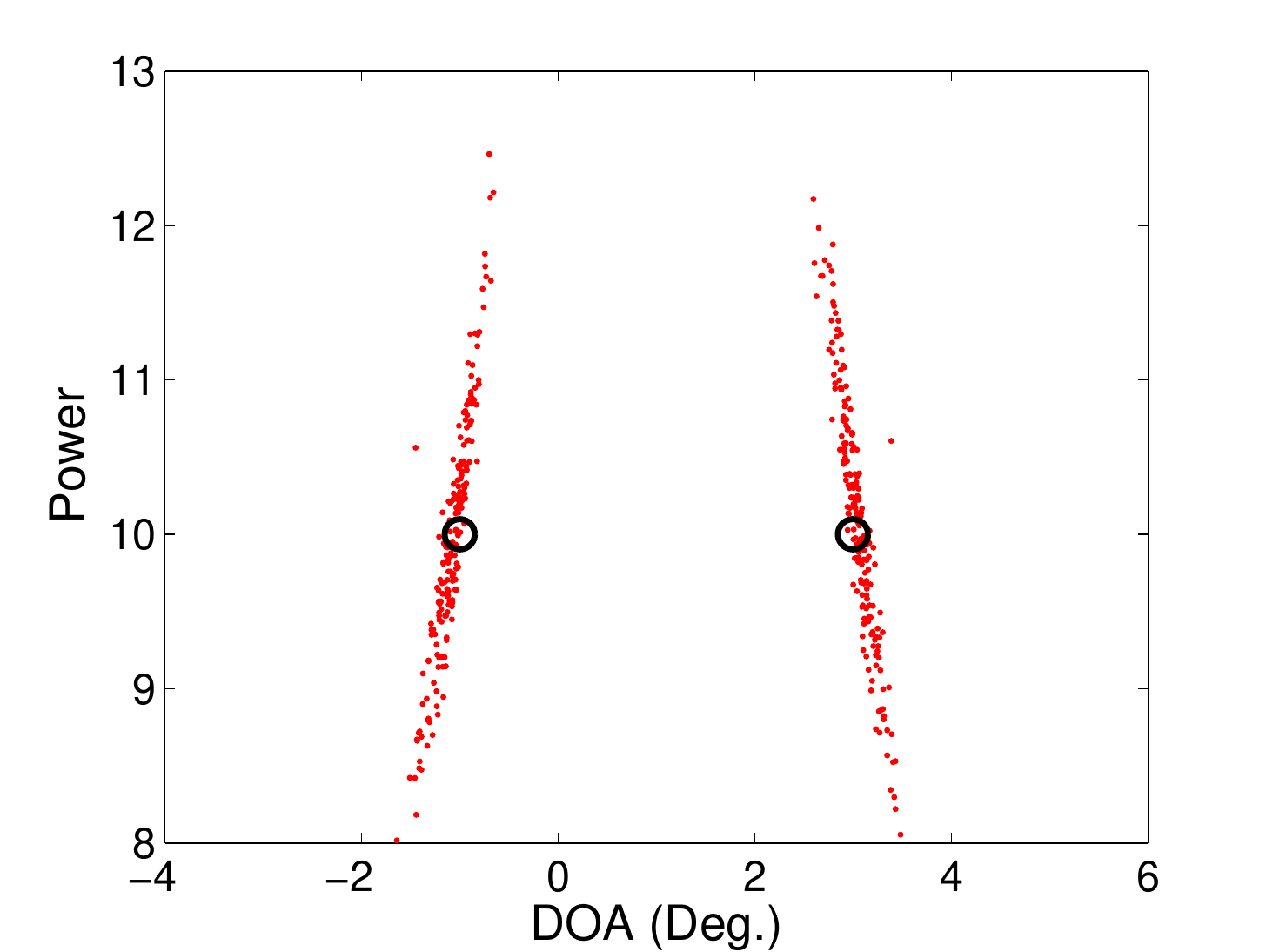}}}
\centerline{
\subfloat[$\text{FICMRA}_{\text{log}}$]{\includegraphics[width=1.8in]{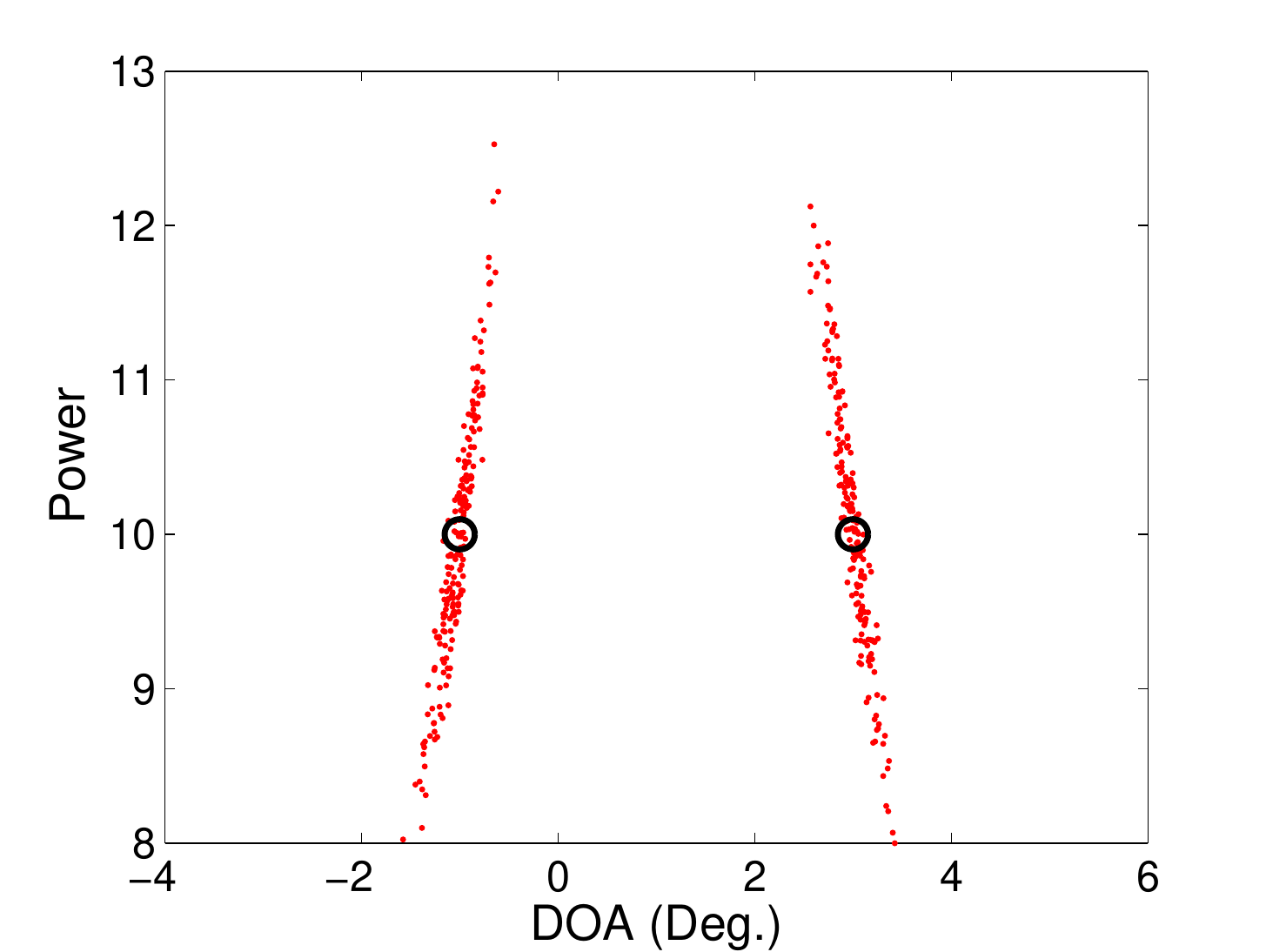}}
\hfil
\subfloat[$\text{FICMRA}_{\text{lp}}$]{\includegraphics[width=1.8in]{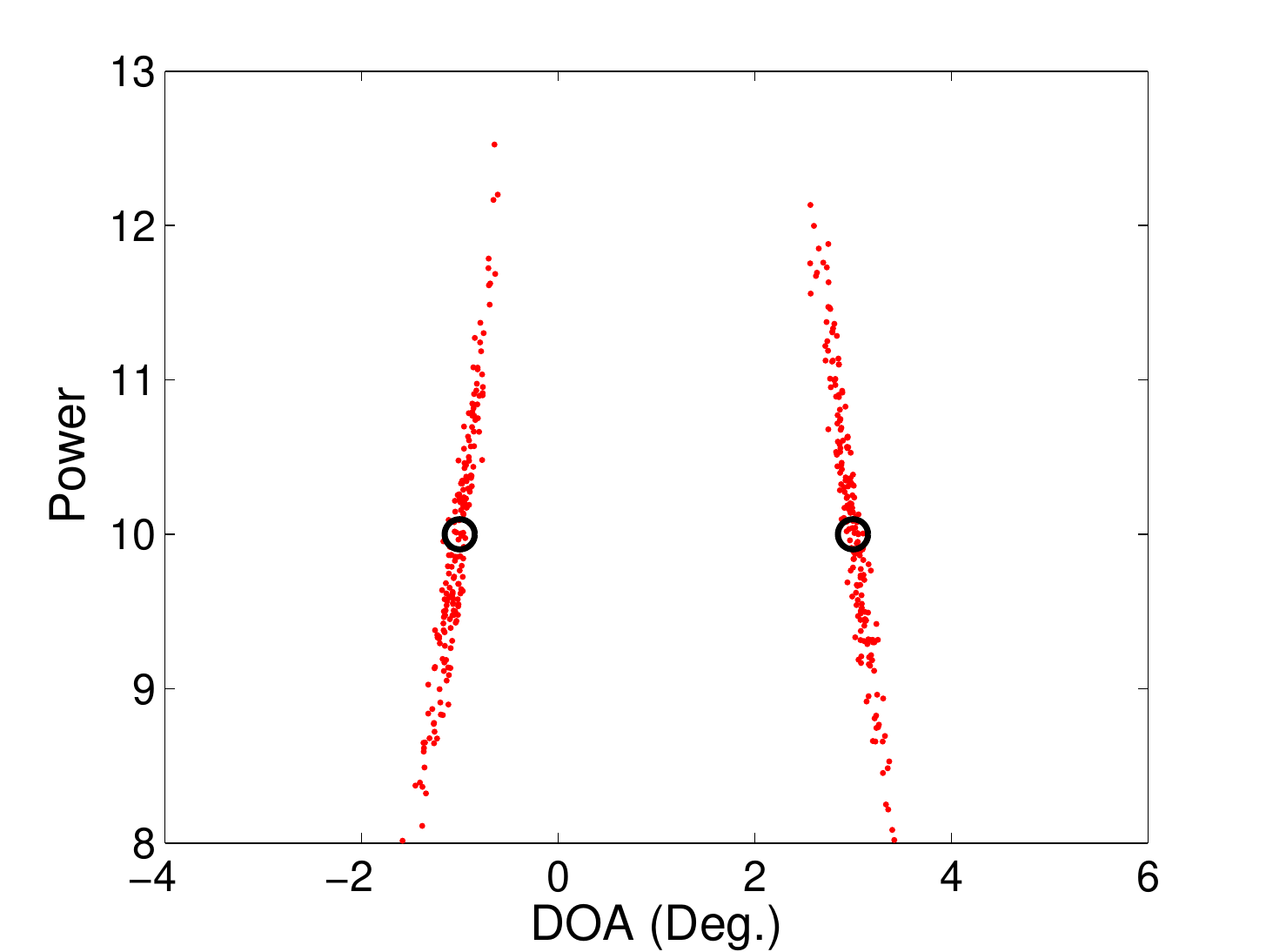}}
\hfil
\subfloat[$\text{FICMRA}_{\text{lap}}$]{\includegraphics[width=1.8in]{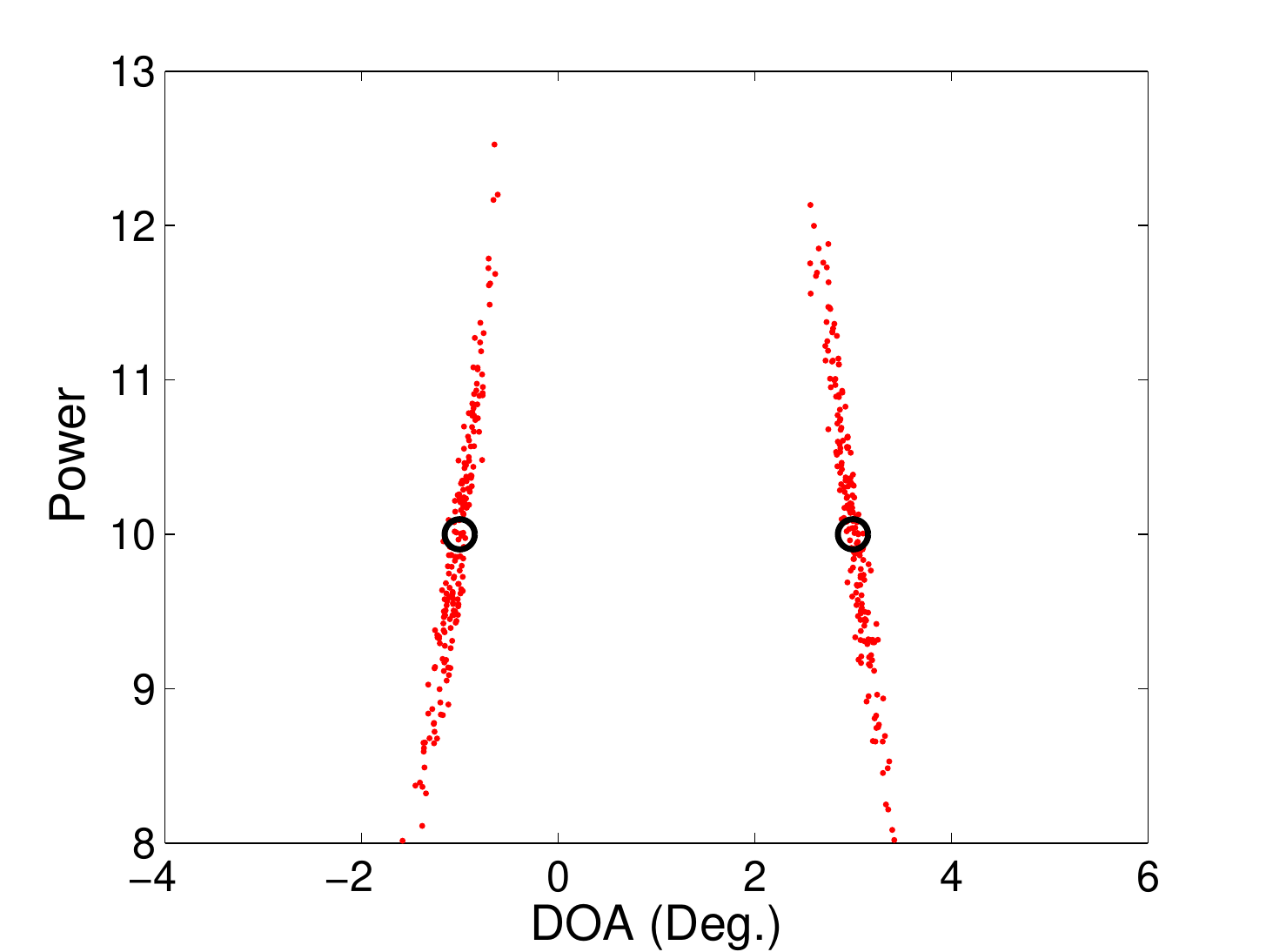}}}
\caption{DOA estimation comparison of CMRA, ICMRA and FICMRA for two uncorrelated sources impinging from $[-1^{\circ}, 3^{\circ}]$ with $L=400$ and SNR$=10$dB.}
\label{fig DOA estimation comparison}
\end{figure*}

\begin{figure*}[!t]
\centerline{
\subfloat[CMRA]{\includegraphics[width=1.8in]{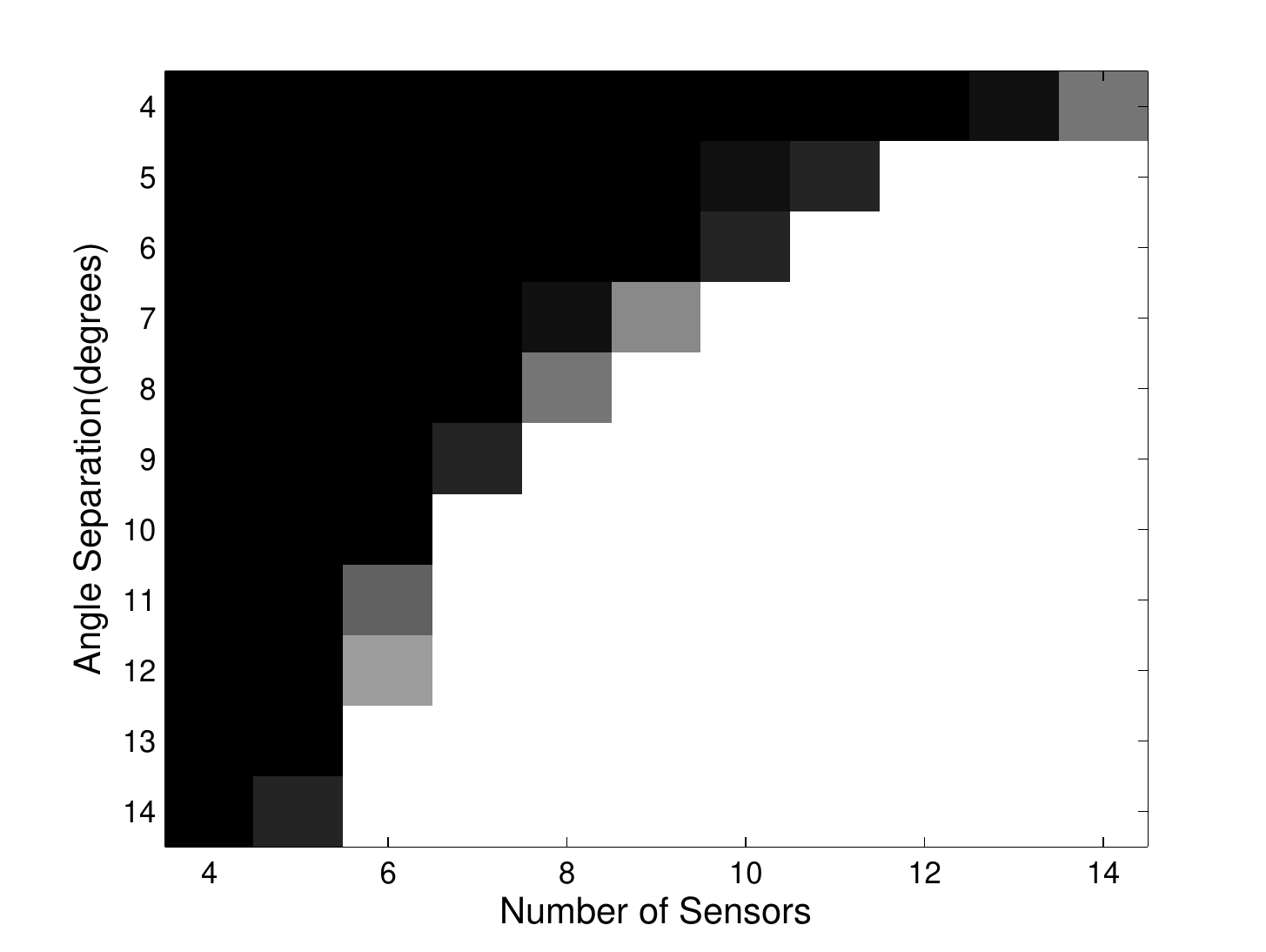}}
\hfil
\subfloat[$\text{ICMRA}_{\text{log}}$]{\includegraphics[width=1.8in]{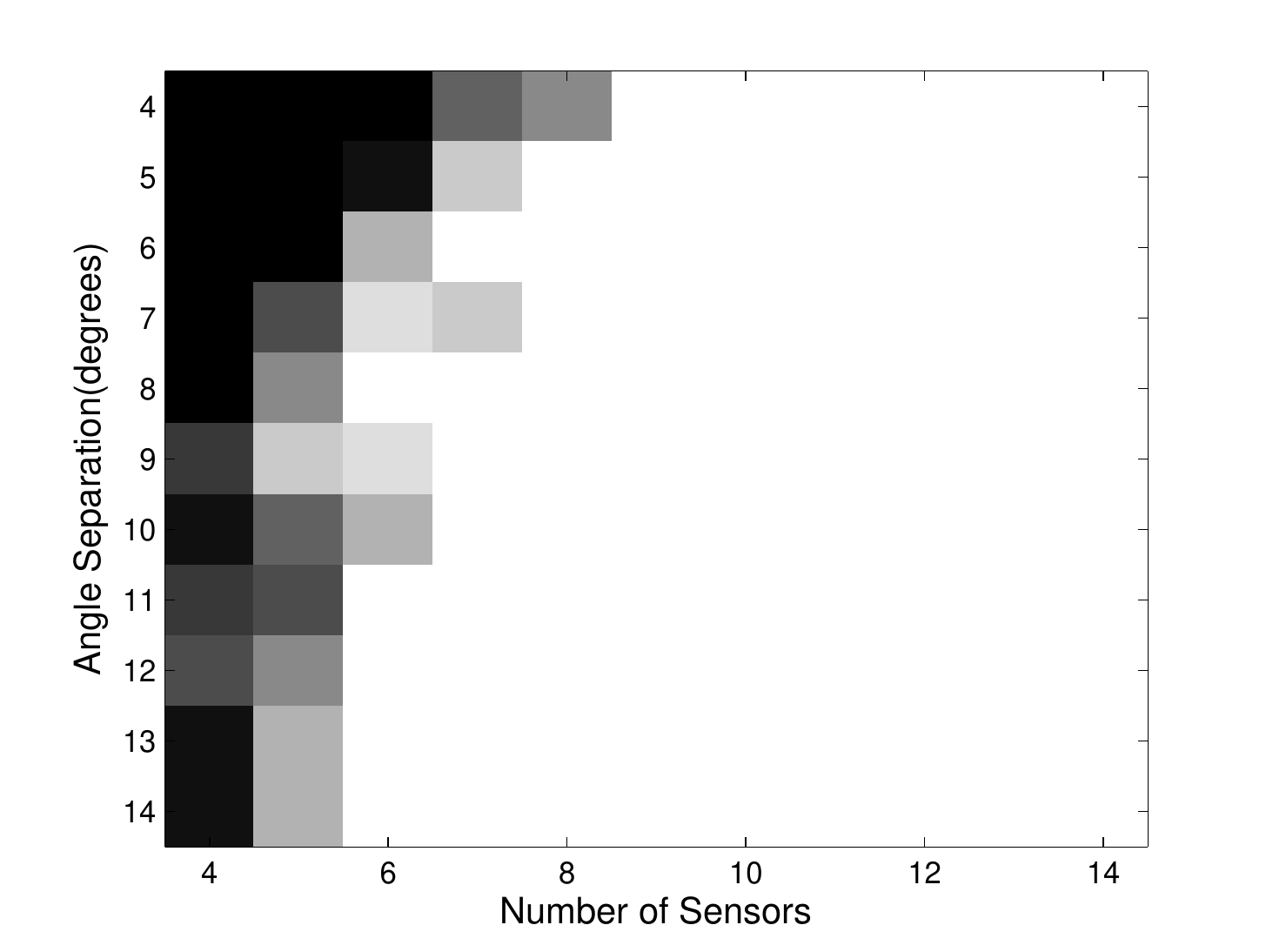}}
\hfil
\subfloat[$\text{ICMRA}_{\ell_p}$]{\includegraphics[width=1.8in]{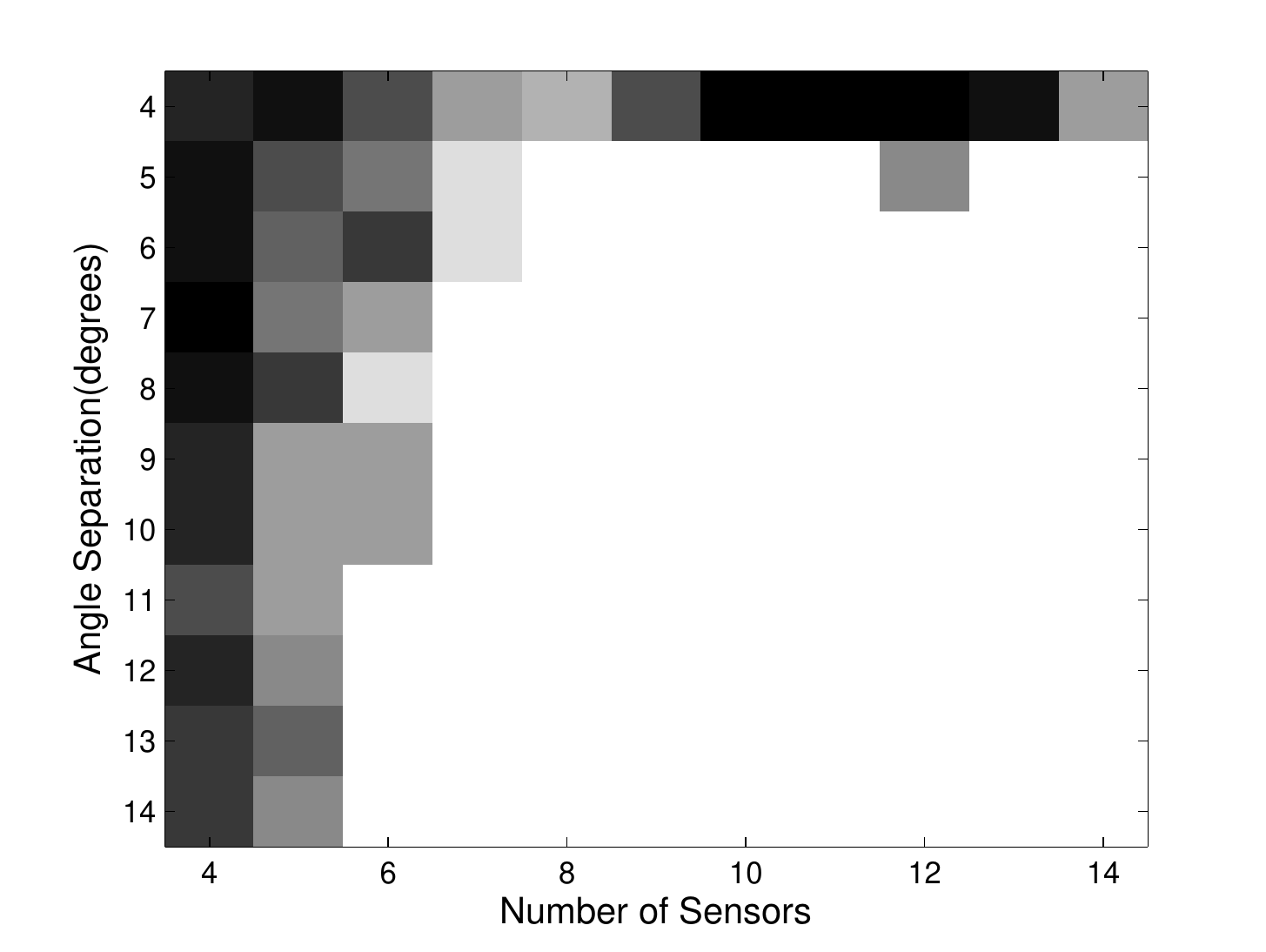}}
\hfil
\subfloat[$\text{ICMRA}_{\text{lap}}$]{\includegraphics[width=1.8in]{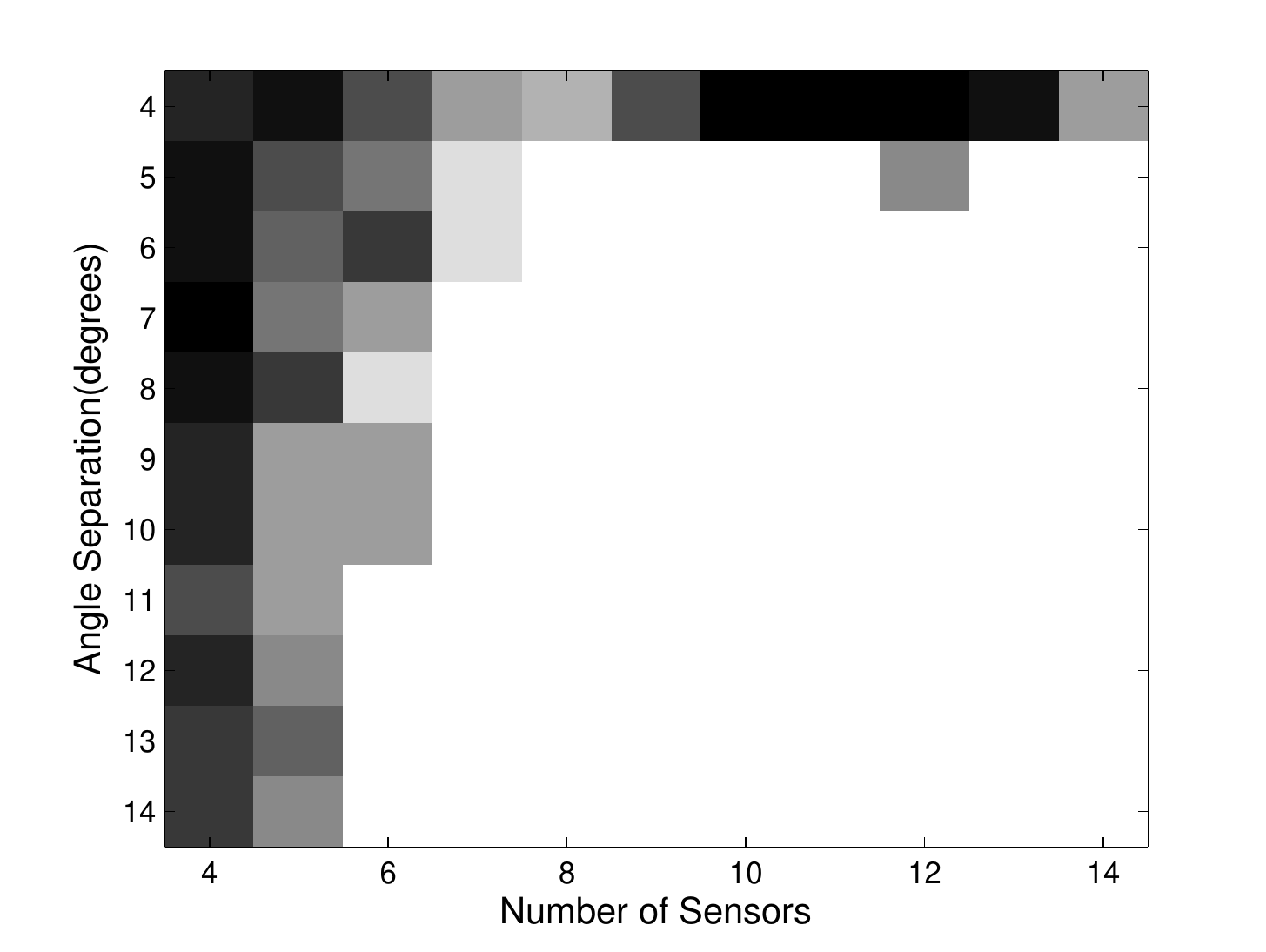}}}
\centerline{
\subfloat[$\text{FICMRA}_{\text{log}}$]{\includegraphics[width=1.8in]{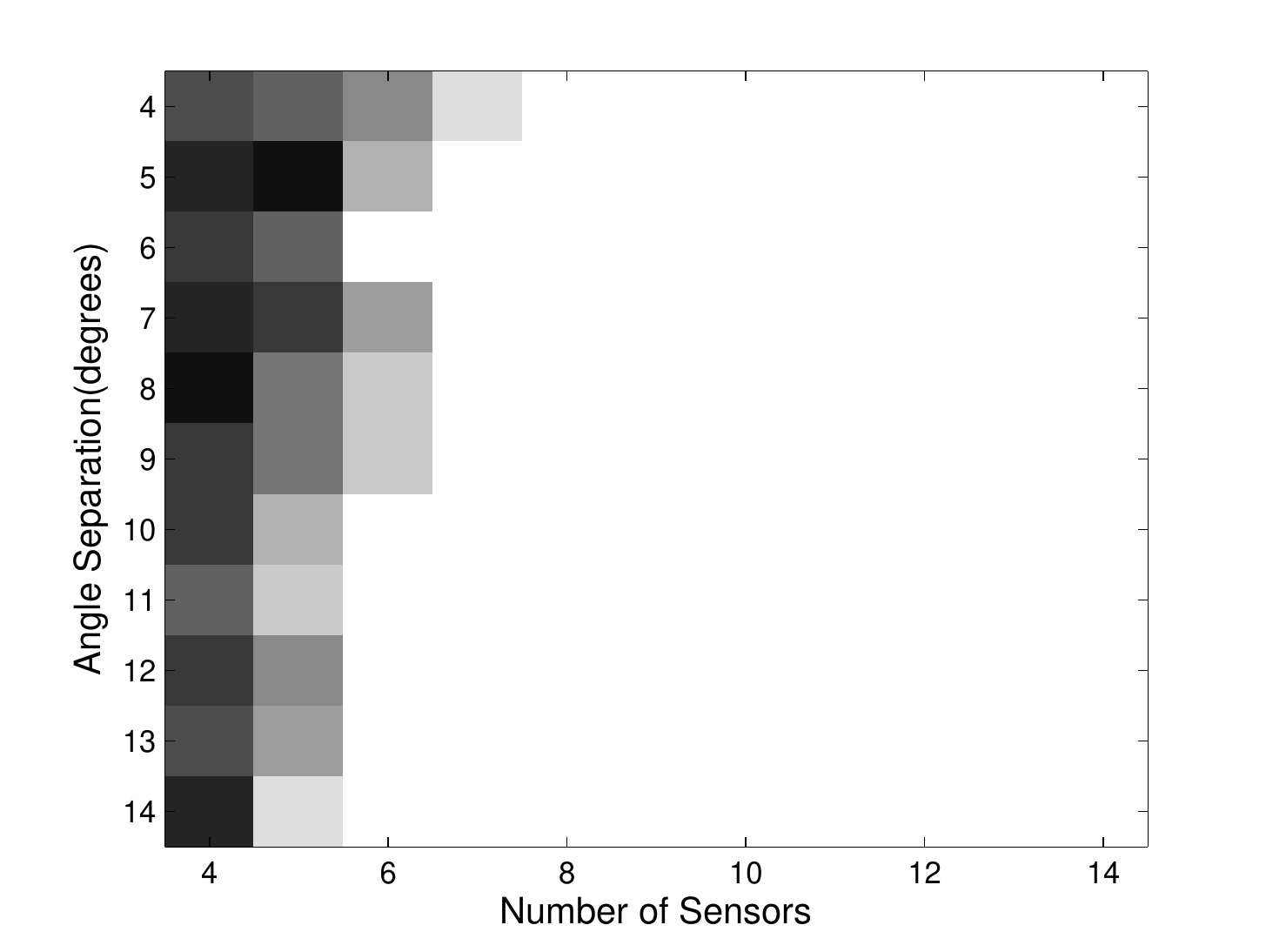}}
\hfil
\subfloat[$\text{FICMRA}_{\text{lp}}$]{\includegraphics[width=1.8in]{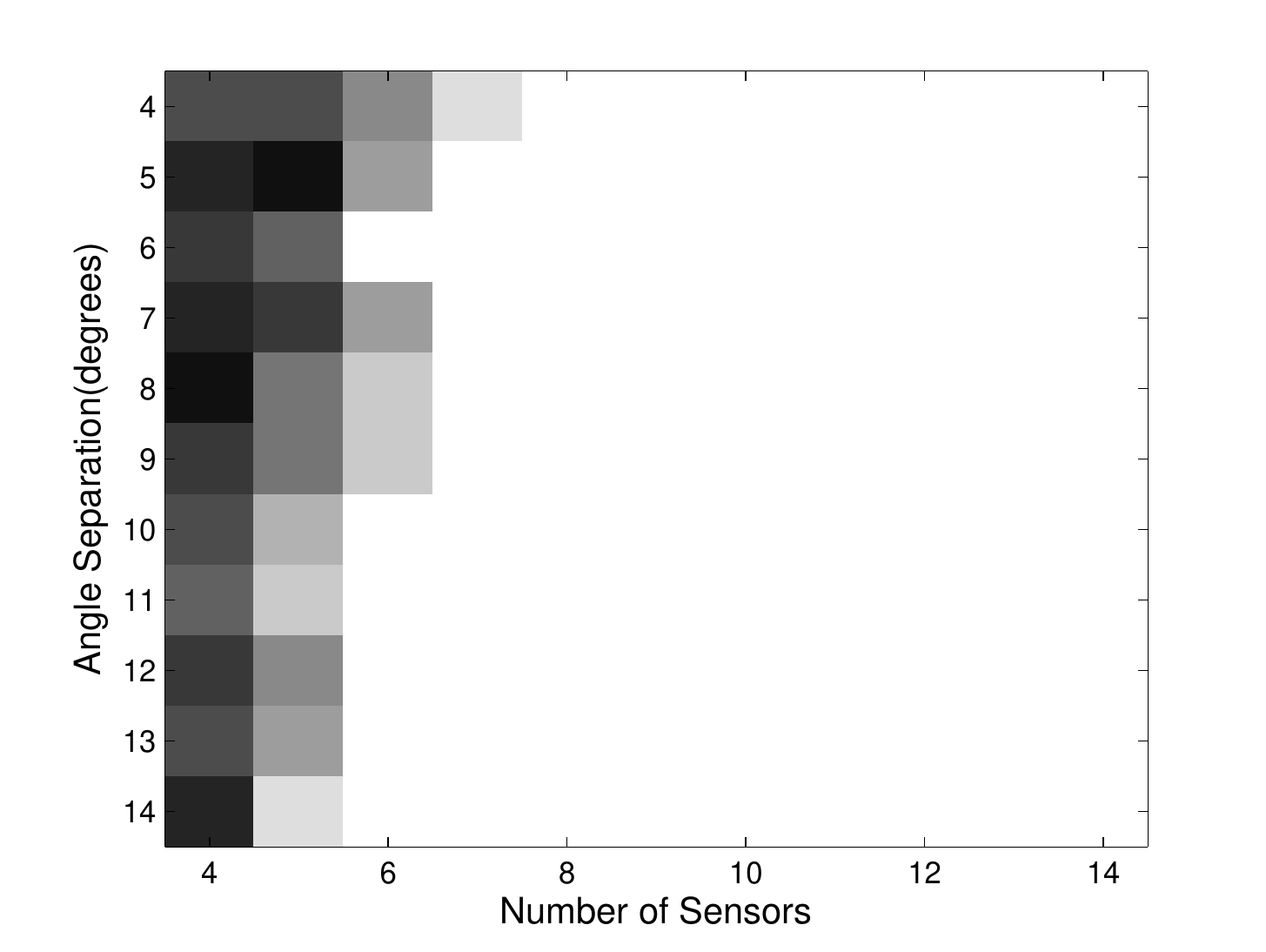}}
\hfil
\subfloat[$\text{FICMRA}_{\text{lap}}$]{\includegraphics[width=1.8in]{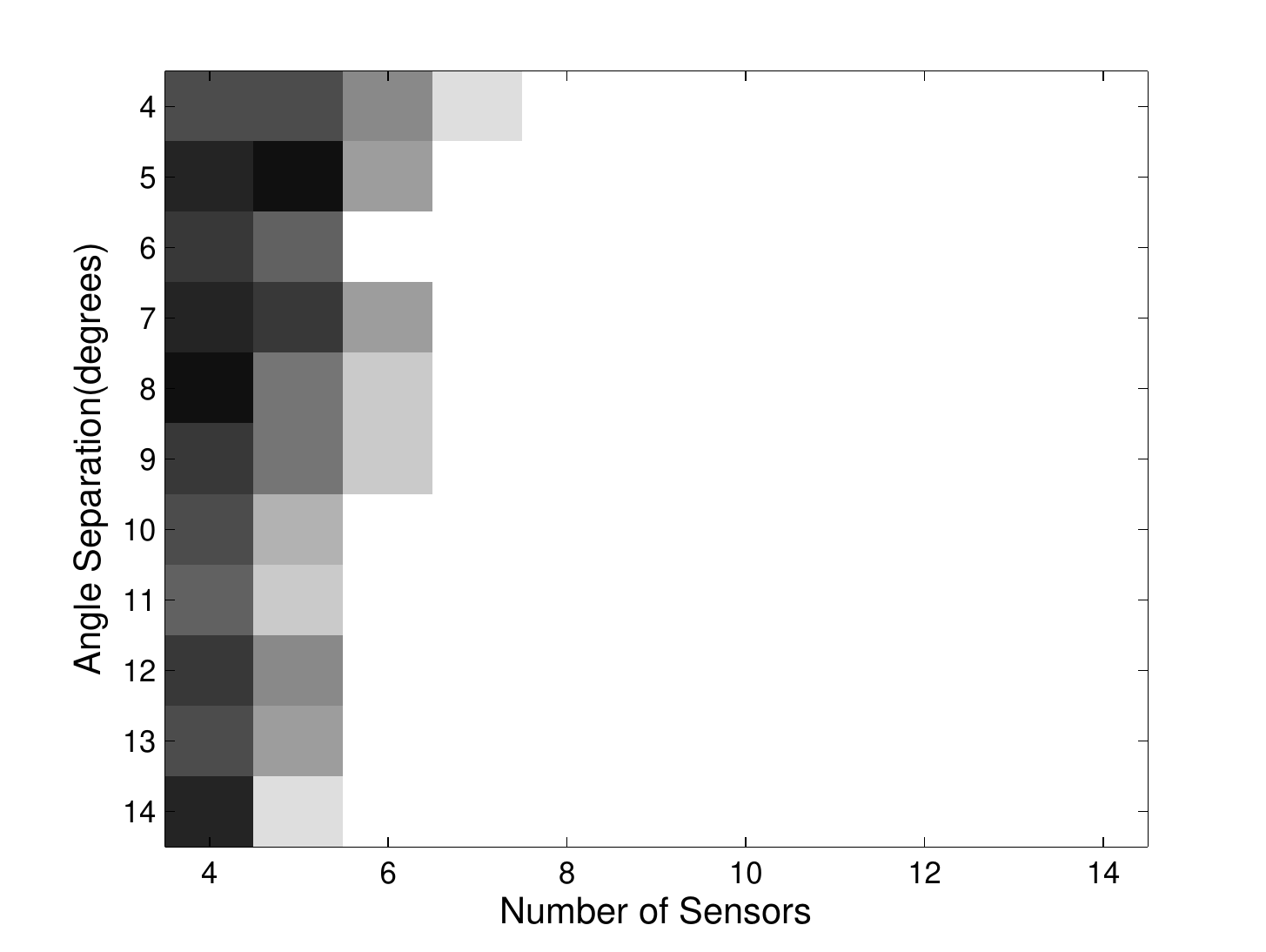}}}
\caption{Success rates of CMRA ICMRA and FICMRA with $L=400$, SNR$=10$dB.}
\label{fig grayscale}
\end{figure*}

\begin{table*}[!t]
\renewcommand{\arraystretch}{1.5}
\caption{CPU Time (in seconds) comparison of CMRA and the proposed methods. }
\label{table CPUTime}
\centering
\begin{tabular}{cccccccc}
\toprule
& CMRA & $\text{ICMRA}_{\text{log}}$ & $\text{ICMRA}_{\ell_p}$ & $\text{ICMRA}_{\text{lap}}$ & $\text{FICMRA}_{\text{log}}$ & $\text{FICMRA}_{\ell_p}$ & $\text{FICMRA}_{\text{lap}}$ \\
\midrule
CPU Time & 0.1670 & 5.4725 & 4.6653 & 2.1171 & 0.0071 & 0.0071 & 0.0033 \\
\bottomrule
\end{tabular}
\end{table*}

Next, we discuss the choice of $\delta$ and evaluate the influence of different values of $\delta$ on estimation performance. In particular, we repeat the previous simulation with $\delta=1.1,2,5$ and show the DOA estimates with respect to each iteration in Fig. \ref{fig iteration}. The iterations start from CMRA and are able to provide more accurate estimates.
Moreover, it can be observed that $\delta=5$ may lead to a faster convergence speed but subject to a worse estimate (more experiments indicate a larger $\delta$, e.g., $\delta=10$, may result in some outliers). Among the three choices of $\delta$, it is interesting to note that $\delta=1.1$ has the slowest convergence speed whereas does not produce the best estimates. In contrast, $\delta=2$ shows the best estimation performance compared to other choices of $\delta$. Hence $\delta=2$ is a proper choice for $\text{ICMRA}_{\text{log}}$ in terms of both accuracy and speed convergence. The choices of $\delta$ for other proposed methods as shown above have also been empirically and carefully determined. In computational speed, for each trial, CMRA (a.k.a. the first iteration of $\text{ICMRA}_{\text{log}}$) takes $0.25$s on average while $\text{ICMRA}_{\text{log}}$ requires $5.97$s because of the iterative procedure. In conclusion, compared to the CMRA, the reweighted iteration strategy is able to improve the resolution at the cost of more expensive computations.


\subsection{DOA Estimation Performance}

\begin{figure}[!t]
\centering
\includegraphics[width=3in]{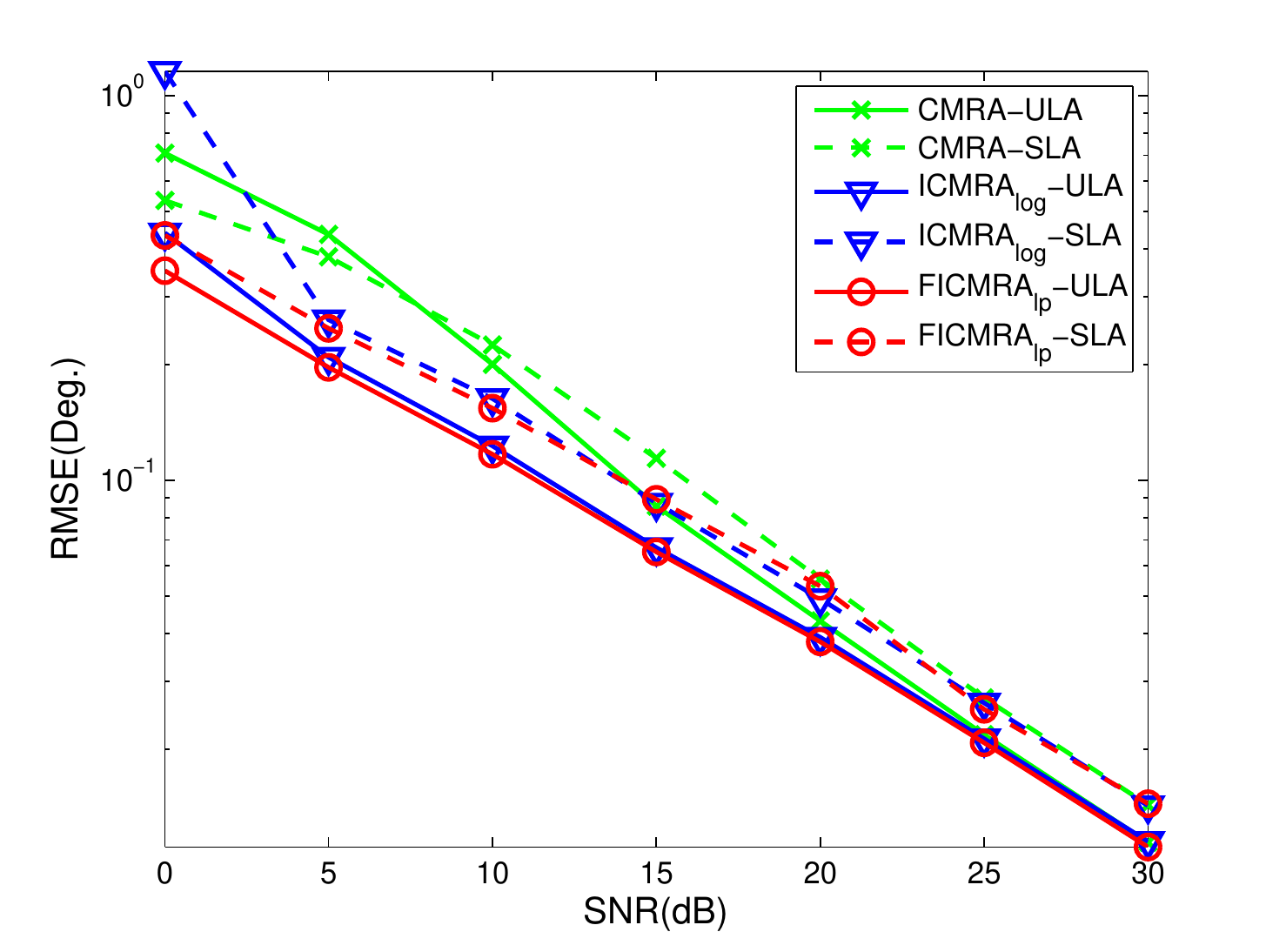}
\caption{RMSE results of CMRA, $\text{ICMRA}_{\text{log}}$ and $\text{FICMRA}_{\ell_p}$ in the cases of a $7$-element ULA and a $4$-element SLA with $\bm{\Omega} = \{1,2,5,7\}$. The signals impinge onto the arrays from directions of $[-5^{\circ},5^{\circ}]$, $L=200$.}
\label{fig RMSE SNR ULA and SLA}
\end{figure}

We now evaluate the DOA estimation performance of the proposed methods with comparison to CMRA. Assume two narrowband signals impinge onto a $7$-element ULA from directions of $[-1^{\circ},3^{\circ}]$. We collect $L=400$ snapshots which are corrupted by i.i.d. Gaussian noise of unit variance with $\text{SNR}=10$dB. We show the DOA estimation results of the compared methods in Fig. \ref{fig DOA estimation comparison}. The powers can be estimated by solving a least squares problem given the corresponding DOA estimates. It can be seen that the outputs of CMRA are biased while the proposed methods ICMRAs/FICMRAs are able to locate the spatially adjacent sources with a higher precision. It is interesting to note that the two sources tend to merge together for all of these methods. It is because the two DOAs are very closely-located. Specifically, when the noise is heavier, or the number of snapshots is smaller, or the sources become closer, we will find that the two red point paths completely merge together, which indicates that we can detect only one single source at this region in some cases. More simulations show that with a higher SNR, larger $L$ and separation, the two paths tend to be parallel, which means the DOAs are estimated with a higher precision. Finally, the running times of each method are compared in Table \ref{table CPUTime}.
It can be seen that the ICMRAs require the most computations as expected. Note that the CPU times of $\text{ICMRA}_{\text{log}}$ and $\text{ICMRA}_{\ell_p}$ are more than $20$ (i.e., the maximum number of iterations) times larger than that of CMRA. This is because the ICMRA involves an additional SVD in each iteration, which is also time-consuming. As mentioned in Section \ref{sec reweighted CMRA B}, $\text{ICMRA}_{\text{log}}$ and $\text{ICMRA}_{\ell_p}$ can be accelerated since the SVD is not required. Finally, it can be observed that the FICMRAs are computationally very efficient as compared to ICMRAs as well as CMRA.

Next, we study the success rates of ICMRAs, FICMRAs and CMRA in terms of the angle separation $\Delta\theta$ and the number of sensors $M$. It is assumed that two signals impinge onto an $M$-element ULA from directions of $[-1^{\circ},-1^{\circ}+\Delta\theta]$ where $\Delta\theta$ varies from $4^{\circ}$ to $14^{\circ}$ and the number of sensors increases from $4$ to $14$. Successful estimate is declared if the root mean square error (RMSE) of DOA estimation is less than $0.1$. The success rates for each pair of $M$ and $\Delta\theta$ are presented based on $50$ independent trials in Fig. \ref{fig grayscale} where white means complete success and black means complete failure. It can be seen that a large value of $M$ or $\Delta\theta$ leads to a high success rate while a small value of $M$ or $\Delta\theta$ may result in unsatisfactory estimates, leading to a phase transition in the $M$-$\Delta\theta$ domain. By comparing Fig. \ref{fig grayscale}(a)-(g), we can see that ICMRAs/FICMRAs with the three penalties enlarge the success region compared to CMRA. In particular, by noting the fact that $\text{ICMRA}_{\text{log}}$ is able to correctly locate the signals separated by $\Delta\theta=4^{\circ}$ with a $9$-element ULA where $\text{ICMRA}_{\ell_p}$ and $\text{ICMRA}_{\text{lap}}$ fail, it can be concluded that the Logarithm penalty is superior to the $\ell_p$ and Laplace penalties. Furthermore, Fig. \ref{fig grayscale}(b)-(g) also show that the performance improvement has mostly appeared in the small $\Delta\theta$s region, demonstrating the ability of our methods in enhancing resolution.

\begin{figure}[!t]
\begin{minipage}[b]{.48\linewidth}
  \centering
  \centerline{\includegraphics[width=1.7in]{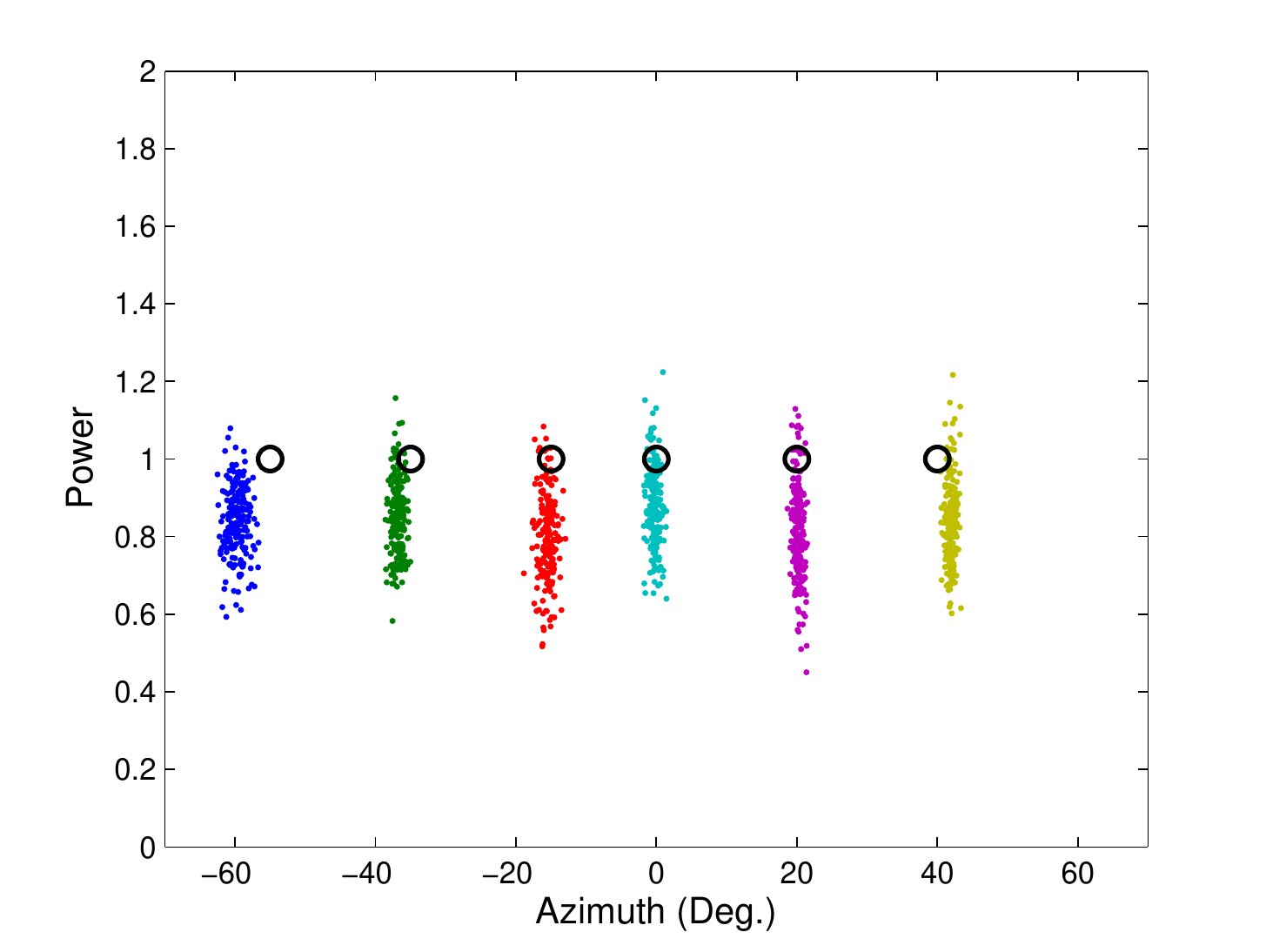}}
  \centerline{(a) CMRA}\medskip
\end{minipage}
\hfill
\begin{minipage}[b]{0.48\linewidth}
  \centering
  \centerline{\includegraphics[width=1.7in]{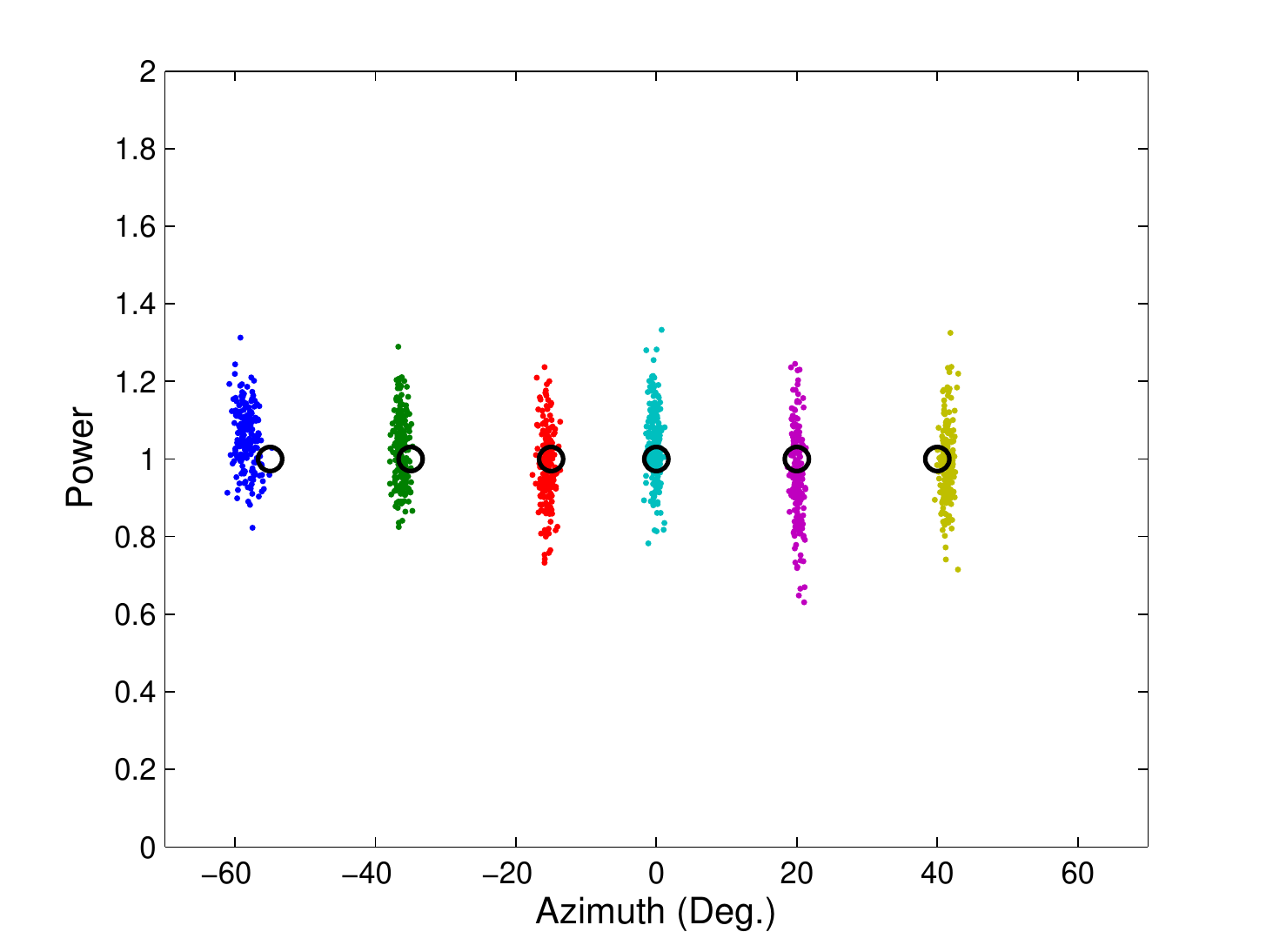}}
  \centerline{(b) $\text{ICMRA}_{\ell_p}$}\medskip
\end{minipage}
\caption{DOA estimation comparisons of CMRA, $\text{ICMRA}_{\ell_p}$ with more sources than sensors.}
\label{fig 6 sources}
\end{figure}

\begin{figure*}[!t]
\centerline{\subfloat[RMSE Comparisons]{\includegraphics[width=3in]{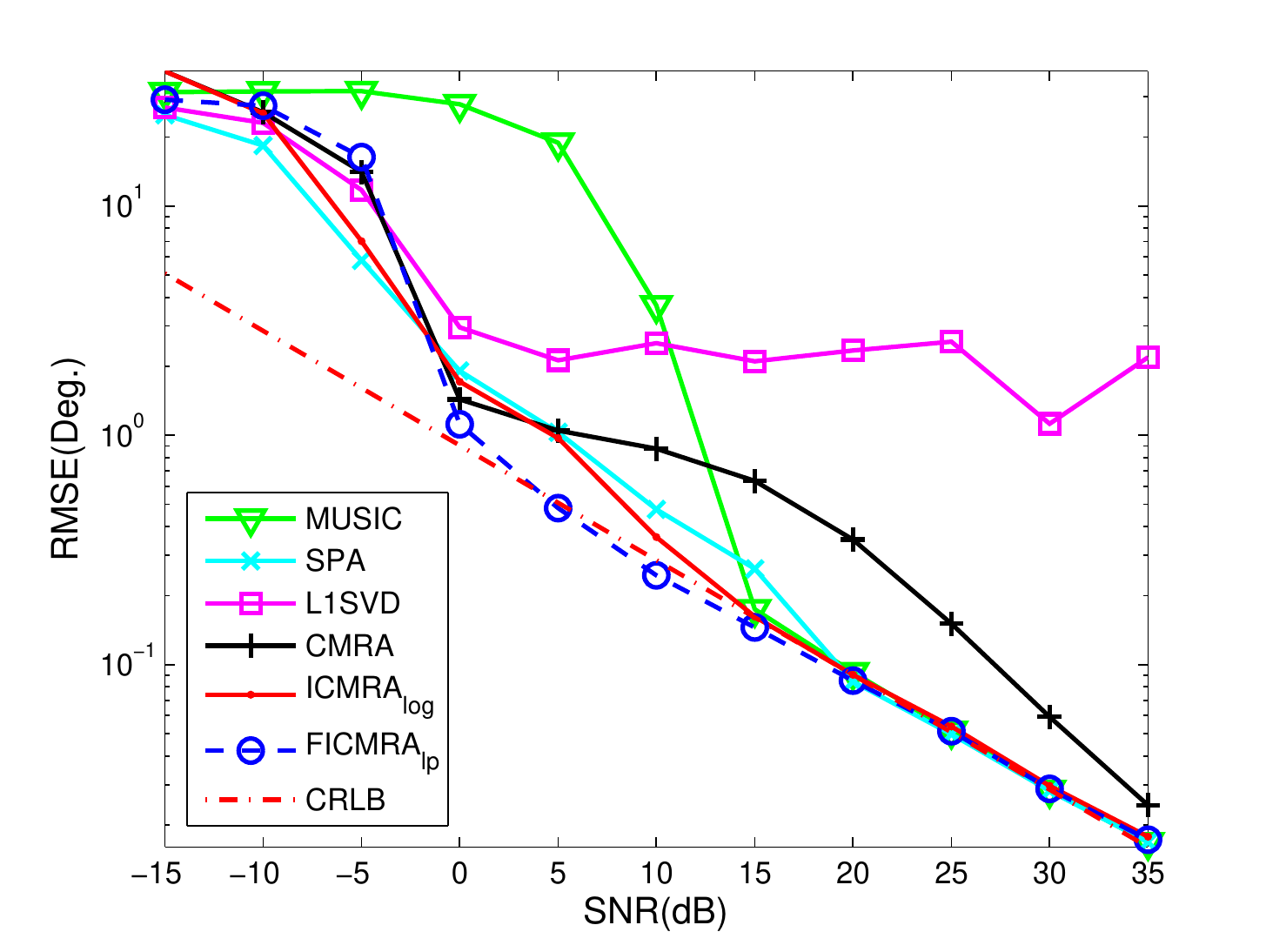}}
\hfil
\subfloat[CPU Time Comparisons]{\includegraphics[width=3in]{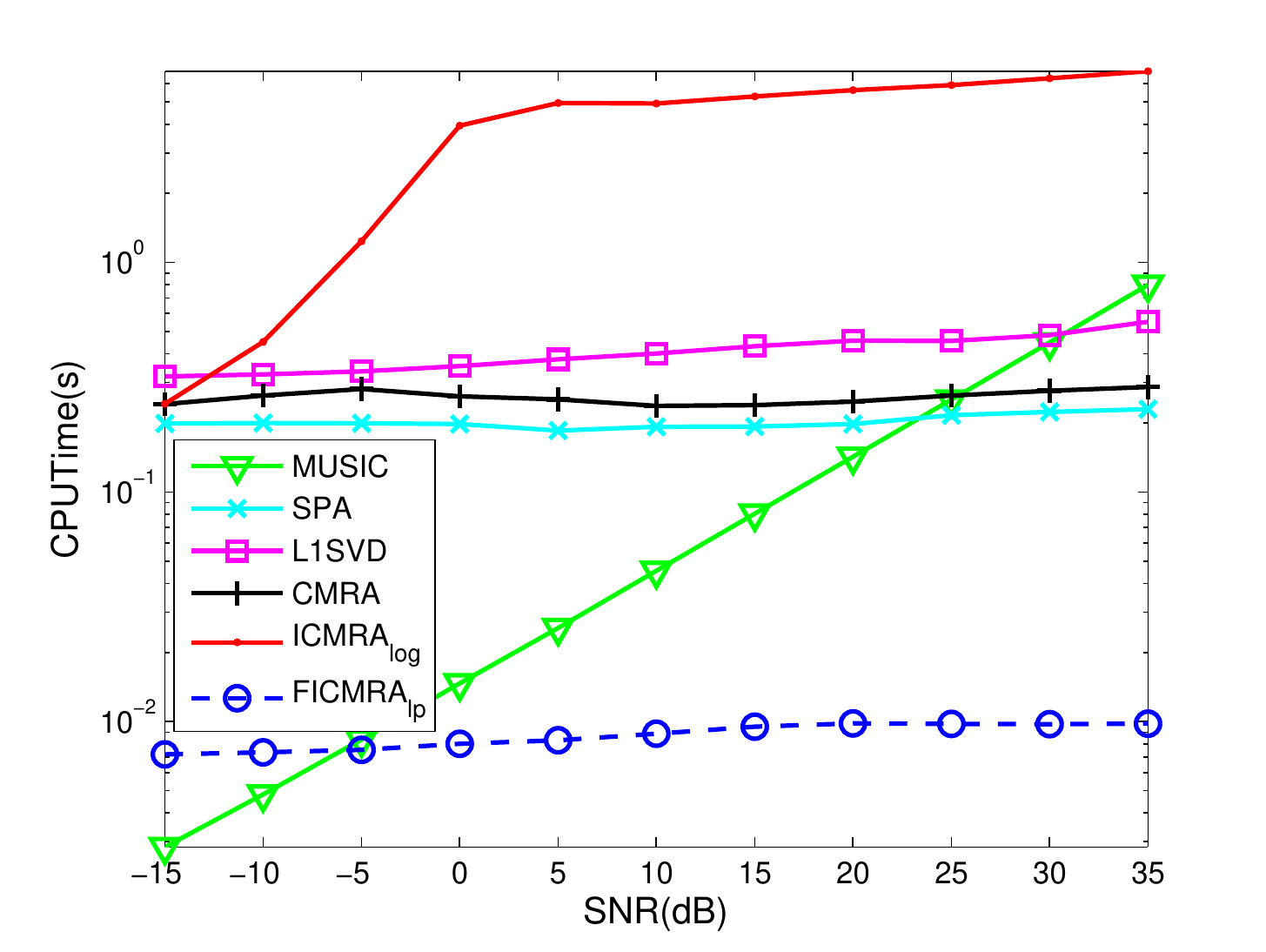}}}
\caption{Performance comparison of $\text{ICMRA}_{\text{log}}$, $\text{FICMRA}_{\text{log}}$ and other methods for two uncorrelated sources impinging from $[-1^{\circ}+v, 3^{\circ}+v]$ with the SNR varying from $-15$dB to $35$dB, $L=200$.}
\label{Fig RMSE_SNR}
\end{figure*}

We then consider the estimation performance of the proposed methods in a $4$-element SLA case with $\bm{\Omega}=\{1,2,5,7\}$, as well as a ULA case with the same aperture size, i.e., a $7$-element ULA. Assume two uncorrelated sources impinge onto both arrays from directions of $[-5^{\circ},5^{\circ}]$ and $200$ snapshots are collected, simultaneously. We run $400$ independent trials and show the RMSE of CMRA and our methods with the SNR varying from $0$dB to $30$dB in Fig. \ref{fig RMSE SNR ULA and SLA}. For brevity, in this example, we only consider the $\text{ICMRA}_{\text{log}}$ and $\text{FICMRA}_{\ell_p}$ as the two representative methods of ICMRA and FICMRA, respectively. From Fig. \ref{fig RMSE SNR ULA and SLA} it can be seen that our proposed methods give a better performance than CMRA in both ULA and SLA cases with respect to most compared region. Furthermore, it should be noted that, compared to the ULA case, the SLA case is able to save $43\%$ data with only a small degradation of $1$dB to $1.5$dB in terms of RMSE for the three methods when SNR$\geq5$dB. Finally, it can be seen that for each array, the curves of the three methods tend to merge together when SNR is larger than $25$dB.

We finally evaluate the performance of our methods in the SLA case with more sources than sensors. In particular, assume six uncorrelated sources impinge onto a $4$-element SLA with $\bm{\Omega} = \{ 1,2,5,7 \}$ from directions of $[-55^{\circ}, -35^{\circ}, -15^{\circ}, 0^{\circ}, 20^{\circ}, 40^{\circ}]$ simultaneously. $400$ snapshots which are corrupted by i.i.d. Gaussian noise are collected for source localization and the $\text{SNR}$ is set to $0\text{dB}$. We compare the estimation performance of CMRA and $\text{ICMRA}_{\ell_p}$ via $200$ independent trials and show the results in Fig. \ref{fig 6 sources}, from which it can be seen that both methods can correctly locate all the $6$ sources while $\text{ICMRA}_{\ell_p}$ is able to show better performance in terms of both DOAs and powers. To be specific, the average RMSEs of CMRA with respect to DOA and power are $0.6395$ and $0.1828$, respectively, while those of $\text{ICMRA}_{\ell_p}$ are $0.5486$ and $0.1331$, respectively.
We have also carried out this experiment on a $4$-element SLA using the Logarithm and Laplace penalties, where the Logarithm penalty failed to provide a satisfactory performance in all trials since several outliers are observed in the experiment, especially when the number of sources is equal to or larger than that of sensors. In contrast, the proposed method with the Laplace penalty is able to give a good performance. Here, we report this observation and suggest to use $\text{(F)ICMRA}_{\ell_p}$ or $\text{(F)ICMRA}_{\text{lap}}$ in the SLA case with the number of sources being larger than that of sensors and use $\text{(F)ICMRA}_{\text{log}}$ in other cases.

\subsection{Comparisons with Prior Arts}

In this section, the performance of the proposed methods (F)ICMRA with the Logarithm penalty is studied with comparison to other existing methods.  
The prior arts considered in this part are CMRA, MUSIC, L1-SVD and SPA. For the sparsity-based method L1-SVD, the discretized interval is set to $2^{\circ}$ and the iterative grid refinement procedure is employed \cite{malioutov2005sparse}. In L1-SVD the number of iterations to refine the grid is set to $5$ and the grid interval of the last iteration is set to $0.1\times10^{-\frac{\text{SNR}}{20}}$ for accuracy consideration. As a consequence, the RMSE of L1-SVD can achieve the Crammer-Rao Lower Bound (CRLB) theoretically.

\begin{figure}[!t]
\centering
\includegraphics[width=3in]{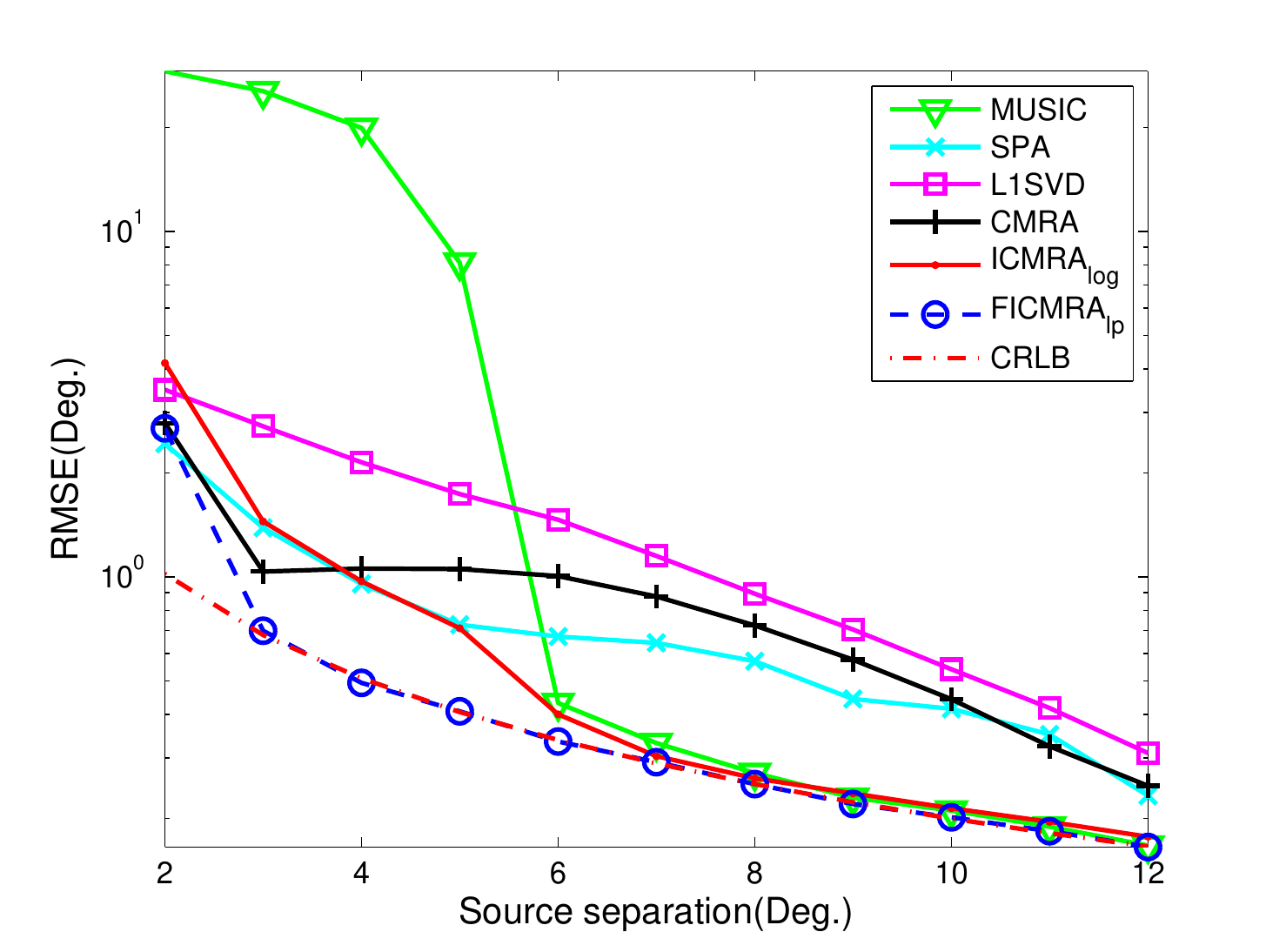}
\caption{Performance comparison of $\text{ICMRA}_{\text{log}}$, $\text{FICMRA}_{\text{log}}$ and other methods for two uncorrelated sources impinging from $[0^{\circ}, \Delta\theta]$ with $\Delta\theta$ varying from $2^{\circ}$ to $12^{\circ}$. We set SNR$=15$dB and $L=200$.}
\label{fig RMSE delta theta}
\end{figure}

\begin{figure*}[!t]
\centerline{\subfloat[RMSE Comparisons]{\includegraphics[width=3in]{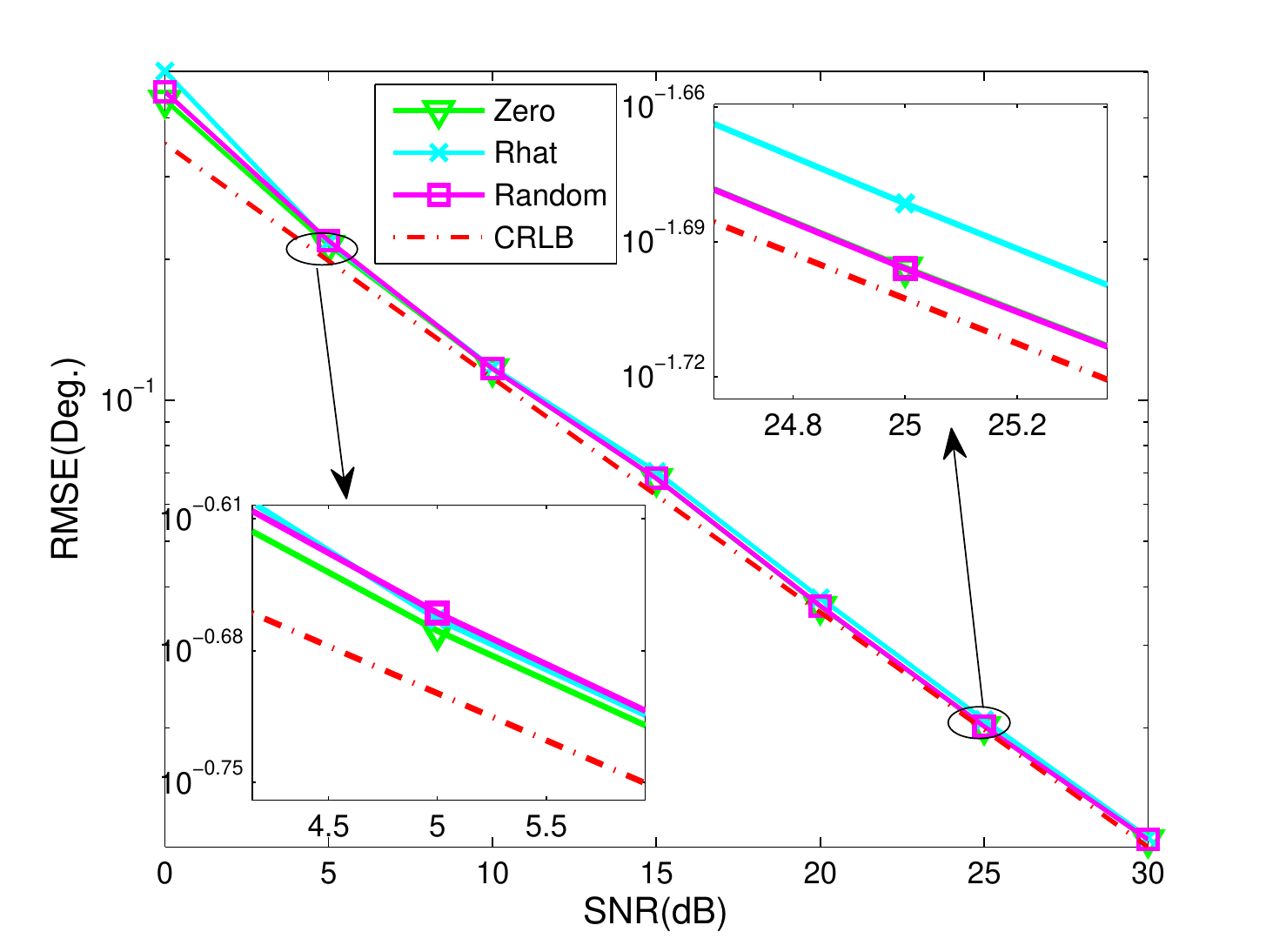}}
\hfil
\subfloat[CPU Time Comparisons]{\includegraphics[width=3in]{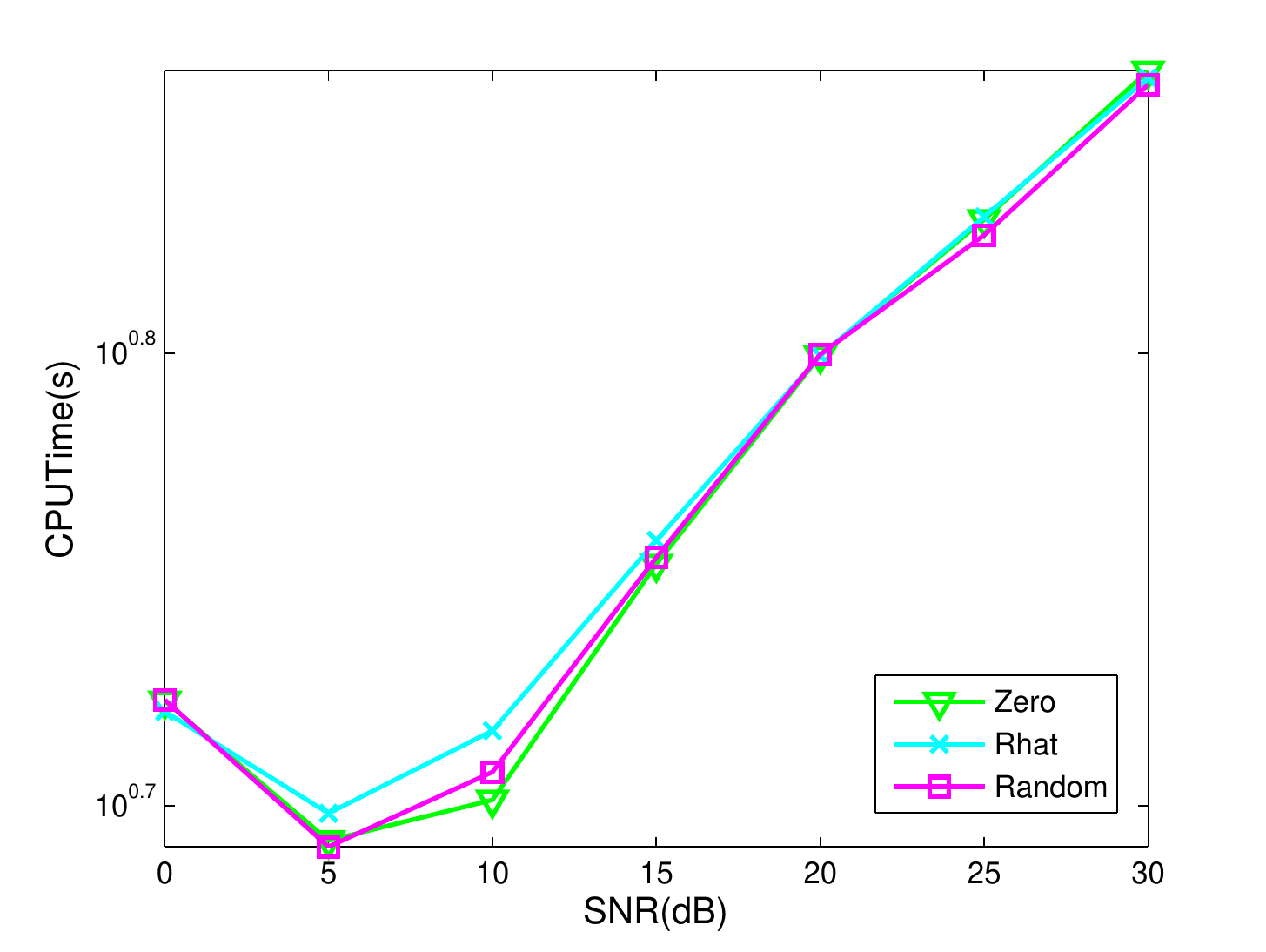}}}
\caption{Performance comparison of $\text{ICMRA}_{\text{log}}$ with respect to different initializations for two uncorrelated sources impinging from $[-5^{\circ}, 5^{\circ}]$ with the SNR varying from $0$dB to $30$dB, $L=200$.}
\label{Fig RMSE_SNR_diffInitials}
\end{figure*}

Suppose that two uncorrelated signals impinge onto a $7$-element ULA from $[-1^{\circ}+v, 3^{\circ}+v]$ where $v$ is chosen randomly and uniformly within $[-1^{\circ}, 1^{\circ}]$ in each trial. Note that the two sources are spatially adjacent since they are only separated by around $4^{\circ}$.
Assume that $200$ snapshots are collected in each trial for DOA estimation. We compare the statistical results of the aforementioned methods obtained from $400$ independent trials in Fig. \ref{Fig RMSE_SNR} with the SNR varying from $-15$dB to $35$dB. Fig. \ref{Fig RMSE_SNR}(a) shows the RMSE comparisons of these methods, from which we can see that $\text{FICMRA}_{\text{log}}$ gives the best performance when $\text{SNR}\geq 0\text{dB}$ and approaches the CRLB curve first. $\text{ICMRA}_{\text{log}}$ shows a better result than CMRA in most compared region. In particular, the curve of CMRA deviates the CRLB curve in the middle range of SNR (it approaches the CRLB when SNR$>40$dB) while $\text{ICMRA}_{\text{log}}$ revises this deviation and is able to coincide with the CRLB curve when SNR$\geq 15$dB. In comparison, MUSIC and SPA approach the CRLB curve when SNR$\geq 15$dB and $20$dB, respectively. L1-SVD is unable to reach the CRLB in this case of closely-located sources (one possible reason may be the high correlation between the columns of the manifold matrix in the sparsity-based methods \cite{liuzhangmeng2012SBL}).
The running times of each methods are also compared in Fig. \ref{Fig RMSE_SNR}(b). $\text{ICMRA}_{\text{log}}$ is time-consuming because of the iterative procedure. Nevertheless, it should be noted that, the increased computational cost of ICMRA can be regarded as a sacrifice for resolution improvement. The running time of L1-SVD, CMRA and SPA are comparable with each other and nearly remain stable in the compared SNR region. The $\text{FICMRA}_{\text{log}}$ requires much smaller computations than other methods when SNR$\geq -5$dB. It should be noted that the effectiveness of $\text{FICMRA}_{\text{log}}$ is because of the derived closed-form solution of model (\ref{eq Lagrangian of ICMRA}) rather than at any cost of the performance deterioration. On the contrary, we can observe that, compared to $\text{ICMRA}_{\text{log}}$, $\text{FICMRA}_{\text{log}}$ is able to reduce the RMSE in some extent. 
Finally, the running time of MUSIC increases exponentially since the sampling grid size grows as the SNR increases.

We then evaluate the RMSE with respect to different source separations $\Delta\theta$. It is assumed two sources impinge onto a $7$-element ULA from $[0^{\circ},\Delta\theta]$ and $200$ snapshots are collected. We set SNR$=15$dB, $L = 200$ and show the RMSEs of these methods with $\Delta\theta$ varying from $2^{\circ}$ to $12^{\circ}$ in Fig. \ref{fig RMSE delta theta}, which reveals the superiority of our proposed methods in separating spatially adjacent signals. It should be mentioned that, for $\text{ICMRA}_{\text{log}}$ in this simulation, we set $\delta=1.1$ as an initial value when $\Delta\theta$ starts from $2^{\circ}$ and gradually increase $\delta$ until $2$ with step $0.1$ as $\Delta\theta$ increases. From Fig. \ref{fig RMSE delta theta} it can be observed that our proposed methods outperform other methods in most cases. In particular, $\text{FICMRA}_{\text{log}}$ is able to coincide with the CRLB when the sources are only $3^{\circ}$ apart. $\text{ICMRA}_{\text{log}}$ and MUSIC approach the CRLB when the source separation is larger than $6^{\circ}$. Note that when $3^{\circ}<\Delta\theta<6^{\circ}$, $\text{ICMRA}_{\text{log}}$ still can identify the two adjacent sources while MUSIC fails to do so. In contrast, L1-SVD, CMRA and SPA are far from the CRLB in the compared region.

\subsection{Influence of Initialization}
\label{sec initialization}

In the above simulations we have set $\bm{u}_0 = \bm{0}$ so that the first iteration is equivalent to the CMRA method. The simulation results reveal that this initialization is able to give a satisfactory performance. In this section, we aim to show the superiority of our initialization by emprically exploring the performance of our method with respect to different initializations.

The experiment is carried out on a 7-element ULA, and the parameter setting of this experiment is similar to that of Fig. \ref{fig RMSE SNR ULA and SLA} but with the following different initializations of $\bm{u}_0$: 1) zero vector, as we have recommended above, 2) random vector with Gaussian distribution, 3) the first column of $\hat{\bm{R}}^{\text{full}}$, which can be calculated according to Remark \ref{Remark Rfull}. We choose the $\text{ICMRA}_{\text{log}}$ for simulation and show the results in Fig. \ref{Fig RMSE_SNR_diffInitials} with the SNR varying from $0$dB to $30$dB. It can be observed that the performance of $\text{ICMRA}_{\text{log}}$ with respect to different initializations show similar performance in terms of both RMSE and CPU time. In particular,  Fig. \ref{Fig RMSE_SNR_diffInitials}(b) and the subfigures in Fig. \ref{Fig RMSE_SNR_diffInitials}(a) indicate that the zero vector initialization that we employed is slightly better that the other two initializations, especially in the middle SNR range. Finally, it can be concluded that our method $\text{ICMRA}_{\text{log}}$ using the iterative procedure is insensitive to the initialization. The simulation results with respect to other penalties and FICMRAs reveal the same conclusion and are omitted.

\section{Conclusions}
\label{sec conclusions}

In this paper, we have proposed a reweighted method named ICMRA by applying a family of nonconvex penalties on the singular values of the covariance matrix as the sparsity metrics. We have also given the convergence analysis of the proposed method as well as two fast implementation algorithms. We have shown that ICMRA can be considered as a sparsity-based method with infinite number of sampling grids. The proposed ICMRA can enhance sparsity and resolution compared to CMRA, as verified through simulations. In our future studies, we will extend the proposed method into the case of generalized arrays. We will be specifically interested in the case where the sensors of an array can be arbitrarily located without being restricted to the half-wavelength sensor spacing, which is the limitation of the subspace-based methods and the SCMR methods.

\appendices

\section{Proof of Theorem \ref{theorem convergent}}
\label{appendix convergent}

To prove Theorem \ref{theorem convergent}, we introduce the following lemma.
\begin{lemma}[\cite{ICRA2014Mohammadi}]\label{lemma G is concave}
  Assume $\mathcal{G}^{\epsilon}[T(\bm{u})]=\sum_i g^{\epsilon}[\sigma_i[T(\bm{u})]]$. If $g^{\epsilon}$ is twice differentiable, strictly concave, and $g^{\epsilon''}(x) \leq -m_u<0$ for any bounded $u$ and $0\leq x\leq u$, then $\mathcal{G}^{\epsilon}[T(\bm{u})]$ is strictly concave, and for any bounded $\bm{u}, \bm{v}\in \mathbb{R}^N, \bm{u}\neq \bm{v}$, there is some $m>0$ such that
  \begin{equation}\label{eq lemma}
  \begin{split}
    \mathcal{G}^{\epsilon}[T(\bm{u})] - \mathcal{G}^{\epsilon}[T(\bm{v})] \leq &\big\langle T(\bm{u}-\bm{v}), \nabla\mathcal{G}^{\epsilon}[T(\bm{v})]\rangle\\
    &- \frac{m}{2}\big\| T(\bm{u}-\bm{v}) \big\|_F^2.
  \end{split}
  \end{equation}
\end{lemma}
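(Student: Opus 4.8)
The plan is to recognize $\mathcal{G}^{\epsilon}$ as a \emph{spectral (unitarily invariant) function} and to transfer the strong concavity of its scalar generator to the matrix argument; once $m$-strong concavity in the Frobenius norm is established, the claimed inequality is nothing but its first-order characterization. First I would pin down the eigenvalue range. Since $T$ is linear, for bounded $\bm{u},\bm{v}$ the whole segment $\{T(\bm{v})+t\,T(\bm{u}-\bm{v}):t\in[0,1]\}$ lies in the convex set $\mathcal{S}$ of Hermitian matrices whose eigenvalues lie in $[0,c]$ for some finite $c$. On $[0,c]$ the hypothesis gives $g^{\epsilon''}(x)\le-m_c<0$, so with $h^{\epsilon}[\bm{\sigma}]=\sum_i g^{\epsilon}(\sigma_i)$ the Hessian of $h^{\epsilon}$ is the diagonal matrix $\text{diag}\big(g^{\epsilon''}(\sigma_i)\big)\preceq-m_c\bm{I}$; hence $h^{\epsilon}$ is $m_c$-strongly concave on the box $[0,c]^N$. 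I set $m=m_c>0$.

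The core step is the transfer. Note that $\mathcal{G}^{\epsilon}[\bm{X}]=h^{\epsilon}\big[\bm{\sigma}[\bm{X}]\big]$ is the composition of the symmetric, separable generator $h^{\epsilon}$ with the eigenvalue map, and that $\tfrac{m}{2}\|\bm{X}\|_F^2=\tfrac{m}{2}\|\bm{\sigma}[\bm{X}]\|_2^2$ is itself such a composition. Therefore
\begin{equation}
\mathcal{G}^{\epsilon}[\bm{X}]+\tfrac{m}{2}\|\bm{X}\|_F^2=\Big(h^{\epsilon}(\cdot)+\tfrac{m}{2}\|\cdot\|_2^2\Big)\big[\bm{\sigma}[\bm{X}]\big].
\end{equation}
Because $h^{\epsilon}$ is $m$-strongly concave on $[0,c]^N$, the generator $h^{\epsilon}(\cdot)+\tfrac{m}{2}\|\cdot\|_2^2$ is concave and symmetric, and by the theory of spectral functions a spectral function inherits concavity from a concave symmetric generator. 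Consequently $\mathcal{G}^{\epsilon}[\bm{X}]+\tfrac{m}{2}\|\bm{X}\|_F^2$ is concave on $\mathcal{S}$, which is exactly the statement that $\mathcal{G}^{\epsilon}$ is $m$-strongly concave in the Frobenius norm; strict concavity follows immediately since $m>0$.

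It then remains to read off the inequality. For a differentiable $m$-strongly concave $F$ one has $F(\bm{X})-F(\bm{Y})\le\langle\nabla F(\bm{Y}),\bm{X}-\bm{Y}\rangle-\tfrac{m}{2}\|\bm{X}-\bm{Y}\|_F^2$. Applying this with $F=\mathcal{G}^{\epsilon}$, $\bm{X}=T(\bm{u})$, $\bm{Y}=T(\bm{v})$, using $\bm{X}-\bm{Y}=T(\bm{u}-\bm{v})$ by linearity of $T$, the symmetry of the real trace inner product, and the gradient formula (\ref{eq gradient of G}) for $\nabla\mathcal{G}^{\epsilon}[T(\bm{v})]$, yields precisely (\ref{eq lemma}).

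The main obstacle is the transfer step itself: justifying that concavity passes from the symmetric generator to the matrix function on the restricted (but convex and permutation-invariant) domain $\mathcal{S}$, and ensuring the eigenvalue bound $c$ holds uniformly along the whole segment rather than only at the endpoints. A self-contained alternative is to compute the Hessian of $\mathcal{G}^{\epsilon}$ directly via the Daleckii--Krein second-order perturbation formula: in the eigenbasis the diagonal part contributes $g^{\epsilon''}(\sigma_i)\le-m$, while each off-diagonal entry contributes the divided difference $\tfrac{g^{\epsilon'}(\sigma_i)-g^{\epsilon'}(\sigma_j)}{\sigma_i-\sigma_j}$, which by the mean value theorem equals $g^{\epsilon''}(\xi)\le-m$ for some $\xi$ between $\sigma_i$ and $\sigma_j$ (extending continuously to $g^{\epsilon''}(\sigma_i)$ when $\sigma_i=\sigma_j$); unitary invariance of $\|\cdot\|_F$ then gives $\big\langle\nabla^2\mathcal{G}^{\epsilon}[\bm{X}]\,\bm{H},\bm{H}\big\rangle\le-m\|\bm{H}\|_F^2$, which integrates along the segment to the same conclusion. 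The delicate points of this route --- degenerate eigenvalues and differentiability of the eigenvalue map --- are exactly what the spectral-function machinery sidesteps.
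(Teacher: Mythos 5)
You should first note what you are being compared against: the paper never proves this lemma itself --- it is quoted from \cite{ICRA2014Mohammadi} and used as a black box in Appendix A --- so your write-up stands in for the external reference rather than an in-paper argument. Measured against that, your proof is correct and follows the same two-stage logic as the source: transfer the curvature of the scalar generator to the matrix function to get $m$-strong concavity of $\mathcal{G}^{\epsilon}$ on a bounded spectral set, then read off (\ref{eq lemma}) as the first-order characterization of strong concavity, using linearity of $T$ so that $T(\bm{u})-T(\bm{v})=T(\bm{u}-\bm{v})$. Your transfer step is legitimate: on the spectral set $\{\bm{X}\ \text{Hermitian}:\ \bm{0}\preceq\bm{X}\preceq c\bm{I}\}$ the Davis/Lewis correspondence between symmetric concave generators and concave spectral functions does hold, and if you want to avoid quoting it, midpoint concavity follows directly from majorization (the eigenvalues of a convex combination are majorized by the convex combination of the eigenvalues) together with Schur-concavity of the symmetric concave generator $h^{\epsilon}+\frac{m}{2}\|\cdot\|_2^2$. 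Your Daleckii--Krein fallback is equally valid, is self-contained, and is in fact closer in spirit to how the curvature bound is obtained in \cite{ICRA2014Mohammadi}; the divided-difference bound $\frac{g^{\epsilon\prime}(\sigma_i)-g^{\epsilon\prime}(\sigma_j)}{\sigma_i-\sigma_j}=g^{\epsilon\prime\prime}(\xi)\leq-m$ is exactly right, and smoothness of $\bm{X}\mapsto\sum_i g^{\epsilon}[\sigma_i(\bm{X})]$ at repeated eigenvalues is not an obstacle because the trace of a $C^2$ primary matrix function is itself $C^2$.

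The one point you leave implicit and should state outright: the lemma is false on all of $\mathbb{R}^N$ as literally quantified, and is only true on $\{\bm{u}:T(\bm{u})\succeq\bm{0}\}$. For indefinite $T(\bm{u})$ one has $\sigma_i=|\lambda_i|$, and $x\mapsto g^{\epsilon}(|x|)$ has an upward kink at $x=0$ whenever $g^{\epsilon\prime}(0)>0$ (already visible in the $1\times1$ case, e.g.\ the Logarithm penalty with $g^{\epsilon\prime}(0)=1/\epsilon$), so concavity --- let alone strong concavity with a uniform $m>0$ --- fails along any line through an indefinite matrix. Your segment argument silently assumes PSD endpoints, which is precisely what makes the whole segment have spectrum in $[0,c]$; the assumption is harmless in context, since the lemma is only ever applied to iterates of (\ref{eq RCMRA4}), which carry the constraint $T(\bm{u})\geq\bm{0}$ (and the paper's Proposition 1 likewise assumes $\bm{X}\geq\bm{0}$), but a careful referee would ask you to say so in one sentence. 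With that sentence added, the argument is complete; it is also worth remarking that $T(\bm{w})=\bm{0}$ forces $\bm{w}=\bm{0}$ (the first column of $T(\bm{w})$ is $\bm{w}$), so strong concavity in the matrix argument genuinely yields a nontrivial quadratic term for all $\bm{u}\neq\bm{v}$.
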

We first show that $\mathcal{G}^{\epsilon}[T(\bm{u}_j)]$ is convergent.
According to Lemma \ref{lemma G is concave} and the concavity of $\mathcal{G}^{\epsilon}[T(\bm{u})]$, we have
\begin{equation}\label{eq11 in theorem}
\begin{split}
  \mathcal{G}^{\epsilon}[T(\bm{u}_{j+1})] - \mathcal{G}^{\epsilon}[T(\bm{u}_{j})] \leq &\text{tr}\big[\nabla \mathcal{G}^{\epsilon}[T(\bm{u}_j)]\big( T(\bm{u}_{j+1}-\bm{u}_{j}) \big)\big]\\
  &- \frac{m}{2}\big\| T(\bm{u}_{j+1}-\bm{u}_j) \big\|_F^2,
\end{split}
\end{equation}
which directly results in,
\begin{equation}\label{eq1 in theorem}
  \mathcal{G}^{\epsilon}[T(\bm{u}_j)] - \mathcal{G}^{\epsilon}[T(\bm{u}_{j+1})] \geq \text{tr}\big[\nabla \mathcal{G}^{\epsilon}[T(\bm{u}_j)]\big( T(\bm{u}_j-\bm{u}_{j+1}) \big)\big].
\end{equation}
Then, since $\bm{u}_{j+1}$ is updated according to model (\ref{eq RCMRA4}), we can conclude that,
\begin{equation}\label{eq 2 in theorem}
  \text{tr}[\nabla \mathcal{G}^{\epsilon}[T(\bm{u}_j)]T(\bm{u}_{j+1})] \leq \text{tr}[\nabla \mathcal{G}^{\epsilon}[T(\bm{u}_j)]T(\bm{u}_{j})],
\end{equation}
which together with (\ref{eq1 in theorem}) confirms that
\begin{equation}
  \mathcal{G}^{\epsilon}[T(\bm{u}_j)] - \mathcal{G}^{\epsilon}[T(\bm{u}_{j+1})] \geq 0.
\end{equation}
Further, because $\mathcal{G}^{\epsilon}[T(\bm{u})]=\sum_i g^{\epsilon}\big[\sigma_i[T(\bm{u})]\big]\geq 0$, the first property can be drawn.

To prove the second property, we first show that the sequence $\{\bm{u}_j\}$ is convergent. We start by applying Lemma \ref{lemma G is concave} on $\mathcal{G}^{\epsilon}[T(\bm{u})]$ to get,
\begin{equation}\label{eq3 in theorem}
\begin{split}
  &\mathcal{G}^{\epsilon}[T(\bm{u}_j)] - \mathcal{G}^{\epsilon}[T(\bm{u}_{j+1})] \\
  \geq &\text{tr}\big[\nabla \mathcal{G}^{\epsilon}[T(\bm{u}_j)]\big( T(\bm{u}_j-\bm{u}_{j+1}) \big)\big] + \frac{m}{2} \big\|T(\bm{u}_{j+1}-\bm{u}_j)\big\|_F^2\\
  \geq &\frac{m}{2} \big\|T(\bm{u}_{j+1}-\bm{u}_j)\big\|_F^2.
\end{split}
\end{equation}
Summing the inequality above for all $j\geq 0$ gives
\begin{equation}
  \frac{m}{2} \sum_{j=0}^{+\infty}\big\| T(\bm{u}_{j+1}-\bm{u}_j) \big\|_F^2 \leq \mathcal{G}^{\epsilon}[T(\bm{u}_0)],
\end{equation}
which implies that $\lim_{j\rightarrow +\infty} T(\bm{u}_{j+1}-\bm{u}_j) = \bm{0}$. Hence the sequence $\{\bm{u}_j\}$ is convergent.

To prove that $\{\bm{u}_j\}$ converges to a local minimum, we first note that when it converges, i.e., $\bm{u}_j \rightarrow \bm{u}^*$ as $j\rightarrow \infty$, the models (\ref{eq RCMRA2}) and (\ref{eq RCMRA4}) have the same KKT conditions. This is because the constraints and the derivatives of the objective functions in the two models are the same when the MM method converges.
Further, since the objective function in (\ref{eq RCMRA2}) monotonically decreases \cite{ICRA2014Mohammadi}, we can confirm that $\bm{u}^*$ is a local minimum of (\ref{eq RCMRA2}). This completes the proof.

\ifCLASSOPTIONcaptionsoff
  \newpage
\fi

\bibliographystyle{IEEEtran}

\end{document}